\theoremstyle{definition} \newtheorem{axgrp}[thm]{Axioms}
\tikzset{
	follow/.style={->, >=stealth, very thick, shorten <=3pt, shorten >=3pt, color=magenta!80},
	edge/.style={line width=.8pt, color=black},
	edgedot/.style={densely dotted, line width=.8pt, color=black},
	edgethdot/.style={densely dotted, line width=.8pt, color=gray},
	edgeth/.style={line width=.8pt, color=gray!60},
	mirror/.style={line width=1.6pt, color=magenta!80},
	dot/.style={circle, draw=black, line width=.8pt, fill=white, inner sep=1.7pt},
	dotth/.style={circle, draw=gray!60, fill=gray!60, inner sep=1.25pt},
	dotwh/.style={circle, draw=gray!60, line width=.8pt, fill=white, inner sep=1.8pt},
	dotwhite/.style={circle, draw=gray, line width=.8pt, fill=white, inner sep=1.8pt},
	dotdark/.style={circle, draw, fill=black, inner sep=1.25pt},
	dotgrey/.style={circle, draw=black, line width=.8pt, fill=gray!60, inner sep=1.6pt},
	coverc/.style={circle, draw=magenta!80, line width=.8pt, inner sep=3pt},
	coverch/.style={circle, draw=magenta!30, line width=.8pt, inner sep=3pt},
	every node/.style={scale=.8},
	every picture/.style={scale=.8},
}
\newcommand\cat[1]{\mathbf{#1}}
\newcommand\graded{\cat{Hilb}^{\mathbb{Z}_2}}
\newcommand\cal[1]{\mathcal{#1}}
\newcommand\ftimes{\,\tilde{\otimes}\,}
\newcommand\complex{\mathbb{C}}
\newcommand\fock{\cal{F}}
\newcommand\pset{\cal{P}}
\newcommand\blankline{\vspace{6pt}}
\newcommand\dagg[1]{#1^\dagger}
\newcommand\ket[1]{|\, #1 \, \rangle}
\newcommand{\ropen}[1]{[#1)} 
\begin{document}
\title{A diagrammatic calculus \texorpdfstring{\\}{}of fermionic quantum circuits}

\author[G. de Felice]{Giovanni de Felice\rsuper{a}}
\address{\lsuper{a}Department of Computer Science, University of Oxford}
\email{\{giovanni.defelice,kangfeng.ng\}@cs.ox.ac.uk}

\author[A. Hadzihasanovic]{Amar Hadzihasanovic\rsuper{b}}
\address{\lsuper{b}RIMS, Kyoto University}
\email{ahadziha@kurims.kyoto-u.ac.jp}

\author[K.F. Ng]{Kang Feng Ng\rsuper{a}}

\begin{abstract}
We introduce the fermionic ZW calculus, a string-diagrammatic language for fermionic quantum computing (FQC). After defining a fermionic circuit model, we present the basic components of the calculus, together with their interpretation, and show how the main physical gates of interest in FQC can be represented in our language. We then list our axioms, and derive some additional equations. We prove that the axioms provide a complete equational axiomatisation of the monoidal category whose objects are systems of finitely many local fermionic modes (LFMs), with maps that preserve or reverse the parity of states, and the tensor product as monoidal product. We achieve this through a procedure that rewrites any diagram in a normal form. As an example, we show how the statistics of a fermionic Mach-Zehnder interferometer can be calculated in the diagrammatic language. We conclude by giving a diagrammatic treatment of the dual-rail encoding, a standard method in optical quantum computing used to perform universal quantum computation.
\end{abstract}

\maketitle

\section*{Introduction}

The ZW calculus is a string-diagrammatic language for qubit quantum computing, introduced by the second author in~\cite{hadzihasanovic2015diagrammatic}. Developing ideas of Coecke and Kissinger~\cite{coecke2010compositional}, it refined and extended the earlier ZX calculus~\cite{coecke2008interacting,backens2014zx}, while keeping some of its most convenient properties, such as the ability to handle diagrams as undirected labelled multigraphs. In the version of~\cite[Chapter 5]{hadzihasanovic2017algebra}, it provided the first complete equational axiomatisation of the monoidal category of qubits and linear maps, with the tensor product as monoidal product. Soon after its publication, the third author and Q.\ Wang derived from it a universal completion of the ZX calculus~\cite{ng2017universal,lics2018zwzx}.

Since its early versions, the ZX calculus has had the advantage of including familiar gates from the circuit model of quantum computing~\cite[Chapter 4]{nielsen2009quantum}, such as the Hadamard gate and the CNOT gate, either as basic components of the language, or as simple composite diagrams. This facilitates the transition between formalisms and the application to known algorithms and protocols, and is related to the presence of a simple, well-behaved ``core'' of the ZX calculus, modelling the interaction of two strongly complementary observables~\cite{coecke2012strong}, in the guise of special commutative Frobenius algebras~\cite{coecke2012new}. Access to complementary observables is fundamental in quantum computing schemes such as the one-way quantum computer, to which the ZX calculus was applied in~\cite{duncan2010rewriting}.

The ZW calculus only includes one special commutative Frobenius algebra, corresponding to the computational basis, as a basic component. On the other hand, as noted already in~\cite{hadzihasanovic2015diagrammatic}, the ZW calculus has a fundamentally different ``core'', which is obtained by removing a single component that does not interact as naturally with the rest. This core has the property of only representing maps that have a definite parity with respect to the computational basis: the subspaces spanned by basis states with an even or odd number of 1s are either preserved, or interchanged by a map. This happens to be compatible with an interpretation of the basis states of a single qubit as the empty and occupied states of a \emph{local fermionic mode}, the unit of information of the \emph{fermionic quantum computing} (FQC) model.

Fermionic quantum computing is computationally equivalent to qubit computing~\cite{bravyi2002fermionic}. The connection with the ZW calculus suggested that an independent fermionic version of the calculus could be developed, combining the best of both worlds with respect to FQC rather than qubit computing: the superior structural properties of the ZW calculus, including an intuitive normalisation procedure for diagrams, together with the superior hands-on features of the ZX calculus.

In this paper, we present such an axiomatisation, to which we refer as the \emph{fermionic ZW calculus}. In Section~\ref{model} we start by recalling the fermionic Fock space formalism and its equivalent formulation in terms of local fermionic modes. This leads to the definition of a categorical model for fermionic circuits: the monoidal category $\cat{LFM}$ of local fermionic modes and maps that either preserve or reverse the parity of a state, with the tensor product of $\mathbb{Z}_2$-graded Hilbert spaces as the monoidal product.

In Section~\ref{components}, we introduce a number of physical gates from which one may build fermionic quantum circuits: the beam splitter, the phase gates, the fermionic swap gate, and the empty and occupied state preparations. After discussing their implementability in the context of fermionic linear optics, we describe our diagrammatic language with its interpretation in $\cat{LFM}$, and show that all the physical gates have simple diagrammatic representations.

In Section~\ref{sec:axioms}, we list the axioms of the fermionic ZW calculus, and prove several derived equations. We introduce short-hand notation for certain composite diagrams (sometimes called the ``spider'' notation in categorical quantum mechanics~\cite[Section~8.2]{coecke2017picturing}), and prove inductive generalisations of the axioms. Then, in Section~\ref{sec:completeness}, we prove our main theorem, that the fermionic ZW calculus is an axiomatisation of $\cat{LFM}$. We achieve this by defining a normal form for diagrams, from which one can easily read the interpretation in $\cat{LFM}$, and showing that any diagram can be rewritten in normal form using the axioms.

As a first application of the diagrammatic language, we calculate in Section~\ref{sec:mach} the statistics of a simple optical circuit, the fermionic Mach-Zehnder interferometer. Finally, in Section~\ref{dualrail}, we give a categorical treatment of the dual-rail encoding --- a standard method in optical quantum computation~\cite{ralphOpticalQuantumComputation2010} which encodes a qubit in two local fermionic modes. We use this to show how the full expressive power of the ZX calculus can be recovered from a restricted set of fermionic gates.

\section{The model}\label{model}

The physical setting of fermions is usually formalised using the \emph{fermionic Fock space}. In this section, we start by recalling this construction to motivate the definition of a \emph{categorical model} of fermionic processes, where the basic unit of information are \emph{local fermionic modes}~(LFMs).

Let us consider a box containing fermions: we want to describe the state space of this many-particle system. Assume that the space of states of a single fermion is given by a Hilbert space $\cal{H}$ of finite dimension $n$. We additionally assume that the particles in the box are \emph{indistinguishable}. This means that permuting the positions of the particles will only change the state of the box by a global phase. The global phase is determined by the \emph{exchange statistics} of the particle under consideration. In the case of fermions these are described by the following antisymmetry relation:
\[ v \otimes w \sim - w \otimes v.\]
The state space of a many-fermions system is then given by the fermionic Fock space:
\begin{equation} \label{eq:fockdef}
	\fock(\cal{H})= \bigoplus_{k=0}^\infty \cal{H} ^{\ftimes k }
\end{equation}
where the direct sum ranges over the number $k$ of particles in the box and $\ftimes$ denotes the antisymmetric tensor product, that is, the quotient of the tensor product obtained by identifying states related by $\sim$; this is canonically equivalent to the subspace of the tensor product whose elements are of the form $\frac{1}{2}(v \otimes w - w \otimes v)$. In particular, $v \ftimes v = 0$, known as the \emph{Pauli exclusion principle}: this means that no pair of particles can be in the same state of $\cal{H}$, and has the consequence that $\cal{H}^{\ftimes k}$ is $0$ for $k > n$, so the infinite direct sum (\ref{eq:fockdef}) reduces to a finite one. We refer to~\cite{blute1994fock} for a discussion of this construction from a logical-computational viewpoint.

This physical setting has a computational counterpart which becomes apparent once we fix an orthonormal basis $X$ for $\cal{H}$, whose elements are called local fermionic modes (LFMs); we write $\cal{H} \simeq \ell^2(X)$ to denote the choice of basis. This induces a basis for $\cal{H}^{\otimes k}$ given by lists of elements of $X$ of size $k$, and a basis of $\cal{H}^{\ftimes k}$ given by the subsets of $X$ of size $k$. Indeed, denoting by $\pset X$ the \emph{powerset} of $X$, we see that
\begin{equation}
	 \fock \ell^2(X) \simeq \ell^2(\pset X);
\end{equation}
as $\pset X \simeq \mathbb{B}^X$, the space $\fock \ell^2(X)$ has basis given by bit-strings of length $n$ denoting the \emph{occupation value} of each LFM:\@ $0$ if there is no particle in the state corresponding to an LFM (\emph{empty} mode), and $1$ if there is (\emph{occupied} mode). We will use the physicists' \emph{ket} notation $\ket{b_1\ldots b_n}$ for the basis state corresponding to such a bit-string.

The construction $\fock$ extends to an endofunctor on the category $\cat{FHilb}$ of finite-dimen\-sional Hilbert spaces and linear maps, in the obvious way. We illustrate its action on morphisms by an example. Suppose $\cal{H}= \mathbb{C}^2$, so that $\fock (\cal{H})= \mathbb{C}^4$, and

\[ f=
\begin{pmatrix}
a & b \\
c & d
\end{pmatrix} : \cal{H} \rightarrow \cal{H}.\]
We draw $f$ as a complete bipartite graph with edges labelled in the complex numbers. To compute $\fock(f)\ket{b_1 b_2}$, we count flows from left to right starting from vertices with $b_i = 1$, and introduce a minus sign each time two edges cross:
\begin{center}
	\begin{minipage}{0.15\linewidth}
		\begin{tikzpicture}[scale=1.6]
		\node [scale=0.5, draw, circle, fill=black] at (0,0) {};
		\node [scale=0.5, draw, circle, fill=black] at (0,1) {};
		\node [scale=0.5, draw, circle, fill=black] at (1,0) {};
		\node [scale=0.5, draw, circle, fill=black] at (1,1) {};
		\draw (0,0) -- (1,0) node [midway, below] {$d$};
		\draw (0,1) -- (1,0) node [pos=0.4, above] {$b$};;
		\draw (0,0) -- (1,1) node [pos=0.4, below] {$c$};;
		\draw (0,1) -- (1,1) node [midway, above] {$a$};;
		\end{tikzpicture}
	\end{minipage}
	$\mapsto$
	\begin{minipage}{0.55\linewidth}
		\begin{equation*}
		\begin{aligned}
		&\ket{00} \mapsto \ket{00}\\
		&\ket{01} \mapsto a \ket{01} + b \ket{10}\\
		&\ket{10} \mapsto c \ket{01} +d  \ket{10}\\
		&\ket{11} \mapsto (ad - bc) \ket{11}
		\end{aligned}
		\end{equation*}
	\end{minipage}
\end{center}
Note that $\fock (f)$ preserves the number of particles in the box; in fact it can be thought of as applying $f$ independently to each particle. In general, for all linear maps $f$, the map $\fock (f)$ will be \emph{number preserving}. The fermionic processes obtained in this way are known to be classically efficiently simulatable~\cite{terhalClassicalSimulationNoninteractingfermion2002a}; the full power of fermionic quantum computation is obtained by considering a larger class of processes on LFMs.

Physical operations are in fact only restricted to preserve the \emph{parity} of the number of occupied LFMs. Creation and annihilation operators on $\fock(\cal{H})$, which model the introduction or elimination of particles, additionally allow to invert the parity of the number of occupied modes.

This is formalised as follows: the Hilbert space of a system of $n$ LFMs splits as $ \fock (\cal{H}) = H = H_0\oplus H_1$, where $H_0$ is spanned by states where an \emph{even} number of LFMs is occupied, and $H_1$ by states where an \emph{odd} number of LFMs is occupied. Then, any physical operation $f: H_0 \oplus H_1 \to K_0 \oplus K_1$ must either preserve, or invert the parity, that is, either map $H_0$ to $K_0$ and $H_1$ to $K_1$, or map $H_0$ to $K_1$ and $H_1$ to $K_0$. This is called the \emph{parity superselection rule}; see~\cite{banuls2007entanglement, dariano2014feynman} for a discussion.

\begin{rem}Note that Bravyi and Kitaev only consider parity-preserving operations in~\cite{bravyi2002fermionic}, although odd states are allowed.
\end{rem}

These operations assemble into a category, as follows.
\begin{defi}
	A \emph{$\mathbb{Z}_2$-graded Hilbert space} is a complex Hilbert space $H$ decomposed as a direct sum $H_0 \oplus H_1$.

	A \emph{pure map} $f: H \to K$ of $\mathbb{Z}_2$-graded Hilbert spaces is a bounded linear map $f: H \to K$ such that $f(H_0) \subseteq K_0$ and $f(H_1) \subseteq K_1$ (\emph{even} map), or $f(H_0) \subseteq K_1$ and $f(H_1) \subseteq K_0$ (\emph{odd} map).

	Given two $\mathbb{Z}_2$-graded Hilbert spaces $H$, $K$, the tensor product $H \otimes K$ can be decomposed as ${(H \otimes K)}_0 := (H_0 \otimes K_0)\oplus(H_1 \otimes K_1)$, and ${(H \otimes K)}_1 := (H_0 \otimes K_1) \oplus (H_1 \otimes K_0)$. Then, the tensor product (as maps of Hilbert spaces) of a pair of pure maps $f: H \to K$, $f': H' \to K'$ is a pure map $f \otimes f': H \otimes H' \to K \otimes K'$ of $\mathbb{Z}_2$-graded Hilbert spaces. The $\mathbb{Z}_2$-graded Hilbert space $\mathbb{C}\oplus 0$ acts as a unit for the tensor product.

	We write $\graded$ for the symmetric monoidal category of $\mathbb{Z}_2$-graded Hilbert spaces and pure maps, with the tensor product as monoidal product. We write $\graded_0$ for the symmetric monoidal category of $\mathbb{Z}_2$-graded Hilbert spaces and \emph{even} maps.
\end{defi}

\begin{rem}
	The zero map $0: H \to K$ is the only pure map between $\mathbb{Z}_2$-graded Hilbert spaces that is both even and odd.
	Note that the odd maps in $\graded$ do not form a category as the composition of two odd maps is even.
\end{rem}

Because $\fock\cal{H}$ has dimension $2^n$ if $\cal{H}$ has dimension $n$, the fermionic Fock space construction can be seen as a functor $\fock : \cat{FHilb} \rightarrow \cat{LFM}$ where $\cat{LFM}$ is defined below.

\begin{defi}
	We write $\cat{LFM}$ for the full monoidal subcategory of $\graded$ whose objects are $n$-fold tensor products of $B := \mathbb{C}\oplus\mathbb{C}$, for all $n \in \mathbb{N}$. We write $\cat{LFM}_0$ for the subcategory of $\cat{LFM}$ containing only the even maps.
\end{defi}

The generating object $B$ of $\cat{LFM}$ represents a single local fermionic mode and will be our unit of information. $B_0$ is the span of $\ket{0}$ denoting an empty mode, and $B_1$ is the span of $\ket{1}$ denoting an occupied mode. As customary, we write $\ket{b_1\ldots b_n}$ for the basis state $\ket{b_1} \otimes \ldots \otimes \ket{b_n}$ of $B^{\otimes n}$, where $b_i \in \{0,1\}$, for $i = 1,\ldots,n$.  The category $\cat{LFM}$ admits the structure of a dagger compact closed category in the sense of~\cite{selinger2011finite}: each object $B^{\otimes n}$ is self-dual, and the dagger of a pure map $f: B^{\otimes n} \to B^{\otimes k}$ is its adjoint $\dagg{f}: B^{\otimes k} \to B^{\otimes n}$.


\section{Components}\label{components}

Operationally, we are interested in representing circuits built from the following \emph{physical gates}, shown here in diagrammatic form (read from top to bottom), next to their interpretation as maps in $\cat{LFM}$.

\begin{enumerate}
\item The \emph{beam splitter} with parameters $r, t \in \mathbb{C}$, such that $|r|^2 + |t|^2 = 1$:

\begin{minipage}{0.15\linewidth}
\input{img/s2_beam-splitter.tex}
\end{minipage}
\begin{minipage}{0.65\linewidth}
	\begin{tabular}{l l}
	$\ket{00} \mapsto \ket{00}$, & $\ket{10} \mapsto r\ket{10} + t\ket{01}$, \\

	$\ket{01} \mapsto -\overline{t}\ket{10} + \overline{r}\ket{01}$, & $\ket{11} \mapsto \ket{11}$.
	\end{tabular}
\end{minipage}

\item The \emph{phase} gate with parameter $\vartheta \in \ropen{0,2\pi}$:

\begin{minipage}{0.15\linewidth}
\begin{tikzpicture}[baseline={([yshift=-.5ex]current bounding box.center)}]
\begin{pgfonlayer}{bg}
	\path[fill, color=gray!10] (-1,-1) rectangle (1,1);
\end{pgfonlayer}
\begin{pgfonlayer}{mid}
	\draw[edge] (0,-1) to (0,1);
	\node[dotwhite] at (0,0) [label=right:{$e^{i\vartheta}$}] {};
\end{pgfonlayer}
\end{tikzpicture}

\end{minipage}
\begin{minipage}{0.65\linewidth}
\begin{tabular}{l l}
	$\ket{0} \mapsto \ket{0}$, & $\ket{1} \mapsto e^{i\vartheta}\ket{1}$.
\end{tabular}
\end{minipage}

\item The \emph{fermionic swap} gate:

\begin{minipage}{0.15\linewidth}
\input{img/s2_exchange.tex}
\end{minipage}
\begin{minipage}{0.65\linewidth}
\begin{tabular}{l l}
	$\ket{00} \mapsto \ket{00}$, & $\ket{10} \mapsto \ket{01}$, \\
	$\ket{01} \mapsto \ket{10}$, & $\ket{11} \mapsto -\ket{11}$.
\end{tabular}
\end{minipage}

\item \emph{Empty state} and \emph{occupied state} preparation:

\begin{minipage}{0.15\linewidth}
\begin{tikzpicture}[baseline={([yshift=-.5ex]current bounding box.center)}]
\begin{pgfonlayer}{bg}
	\path[fill, color=gray!10] (-1,-1) rectangle (1,1);
\end{pgfonlayer}
\begin{pgfonlayer}{mid}
	\draw[edge] (0,0) to (0,1);
	\node[dotdark] at (0,0) {};
	\node[dotdark] at (0,0.25) {};
\end{pgfonlayer}
\end{tikzpicture}

\end{minipage}
\begin{minipage}{0.25\linewidth}
	$1 \mapsto \ket{0}$,
\end{minipage}
\begin{minipage}{0.15\linewidth}
\begin{tikzpicture}[baseline={([yshift=-.5ex]current bounding box.center)}]
\begin{pgfonlayer}{bg}
	\path[fill, color=gray!10] (-1,-1) rectangle (1,1);
\end{pgfonlayer}
\begin{pgfonlayer}{mid}
	\draw[edge] (0,0) to (0,1);
	\node[dotdark] at (0,0) {};
\end{pgfonlayer}
\end{tikzpicture}

\end{minipage}
\begin{minipage}{0.25\linewidth}
	$1 \mapsto \ket{1}$.
\end{minipage}
\end{enumerate}

\noindent
All of these are isometries, which makes them, at least in principle, physically implementable gates; see for example~\cite{ji2003electronic} for the description of an electron beam splitter.

Apart from the fermionic swap gate, which exploits the antisymmetry of fermionic particles under exchange, these operations are structurally the same as those used in implementations of linear optical quantum computing (LOQC), such as the Knill-Laflamme-Milburn scheme~\cite{knill2001scheme}, which employ photons, that is, bosonic particles as resources. The two models seem closely related; given the way that the fermionic swap ties the other components together, and that the impossibility for two particles to occupy the same mode --- a constraint for the bosons in LOQC --- is simply a consequence of Pauli exclusion for fermions, it seems likely to us that the logical features of the optical model are a consequence of the features of the fermionic model, rather than the other way around.
We substantiate this analogy in Section~\ref{dualrail} where we discuss the dual-rail encoding, a method used in both bosonic and fermionic linear optics.

In addition to the physical gates, we need the following \emph{structural} components --- the \emph{dualities} and the \emph{swap} --- which allow us to treat all our diagrams as components of a circuit diagram, which can be connected together in an undirected fashion, permuting and transposing their inputs and outputs:

\blankline%

\begin{minipage}{0.13\linewidth}
\begin{tikzpicture}[baseline={([yshift=-.5ex]current bounding box.center)}]
\begin{pgfonlayer}{bg}
	\path[fill, color=gray!10] (-1,-1) rectangle (1,1);
\end{pgfonlayer}
\begin{pgfonlayer}{mid}
	\draw[edge, out=-90, in=-90, looseness=1.75] (-.75,1) to (.75,1);
\end{pgfonlayer}
\end{tikzpicture}

\end{minipage}
\begin{minipage}{0.30\linewidth}
\begin{equation*}
	1 \mapsto \ket{00} + \ket{11},
\end{equation*}
\end{minipage}
\begin{minipage}{0.13\linewidth}
\begin{tikzpicture}[baseline={([yshift=-.5ex]current bounding box.center)}]
\begin{pgfonlayer}{bg}
	\path[fill, color=gray!10] (-1,-1) rectangle (1,1);
\end{pgfonlayer}
\begin{pgfonlayer}{mid}
	\draw[edge, out=90, in=90, looseness=1.75] (-.75,-1) to (.75,-1);
\end{pgfonlayer}
\end{tikzpicture}

\end{minipage}
\begin{minipage}{0.30\linewidth}
\begin{equation*}
	\ket{b_1 b_2} \mapsto \begin{cases} 1, & b_1 = b_2, \\ 0, & b_1 \neq b_2,\end{cases}
\end{equation*}
\end{minipage}

\blankline%

\begin{minipage}{0.13\linewidth}
\input{img/s2_swap.tex}
\end{minipage}
\begin{minipage}{0.30\linewidth}
\begin{equation*}
	\ket{b_1 b_2} \mapsto \ket{b_2 b_1}.
\end{equation*}
\end{minipage}

\blankline%

While the swap, dualities, fermionic swap, and phase gates are going to be basic components of our diagrammatic calculus, we are going to further decompose beam splitters and state preparations. The components so obtained may not correspond to physical operations by themselves, but they have the property that the result of transposing or swapping any of their inputs or outputs only depends on the number of inputs and outputs. This allows us to treat their diagrammatic representations as vertices of an undirected vertex-labelled multigraph: only the overall arity matters. In addition to making calculations simpler, this enables one to implement the calculus in graph rewriting software, such as Quantomatic~\cite{kissinger2015quantomatic}.

The additional components, given here in ``all-output form'' together with their interpretation in $\cat{LFM}$, are the following.

\begin{enumerate}
\item The binary and ternary black vertex:

\begin{minipage}{0.15\linewidth}
\input{img/s2_w-binary.tex}
\end{minipage}
\begin{minipage}{0.3\linewidth}
	$1 \mapsto \ket{10} + \ket{01}$,
\end{minipage}
\begin{minipage}{0.15\linewidth}
\input{img/s2_w-ternary.tex}
\end{minipage}
\begin{minipage}{0.3\linewidth}
	$1 \mapsto \ket{100} + \ket{010} + \ket{001}$.
\end{minipage}

\item The binary white vertex with parameter $z \in \mathbb{C}$:

\begin{minipage}{0.15\linewidth}
\input{img/s2_z-binary.tex}
\end{minipage}
\begin{minipage}{0.3\linewidth}
	$1 \mapsto \ket{00} + z\ket{11}$.
\end{minipage}
\end{enumerate}

\begin{rem}
Up to a normalising factor, the interpretations of the binary and ternary vertex are known as EPR state and W state, respectively, in qubit theory~\cite{dur2000three}.
\end{rem}
When we draw black and white vertices with a different partition of inputs and outputs, we assume that a particular partial transposition has been fixed, for example to the left:
\begin{equation*}
\input{img/s2_w-transposition.tex}
\end{equation*}
Now, a phase gate with parameter $\vartheta$ is simply a binary white vertex with parameter $e^{i\vartheta}$. The beam splitter with parameters $r, t$, and the state preparations can be decomposed as follows:
\begin{equation*}
\input{img/s2_gates-decomp.tex}
\end{equation*}

More generally, we can now represent circuits in the image of the Fock space functor --- defined in the previous section --- starting from the complete bipartite graph of the input matrix as in the following example:

\input{img/s2_fockspace}

Note that $\fock$ maps unitaries to unitaries, which means we can transport unitaries from $\cat{FHilb}$ to $\cat{LFM}$. Most of the physical gates (such as beam splitters, phases and the fermionic swap gate) lie in the image of the Fock space functor.

\begin{rem}
Morphisms $B^{\otimes 2} \to B^{\otimes 2}$ in the image of $\fock$ belong to, but do not exhaust the class of \emph{matchgates}. Valiant showed that circuits built from the proper subclass of next-neighbour matchgates are classically efficiently simulatable~\cite{valiantQuantumComputersThat2001}. These were related to fermionic linear optics in~\cite{terhalClassicalSimulationNoninteractingfermion2002a}. If the next-neighbour restriction is dropped then universal quantum computation is recovered, as shown by Jozsa et al~\cite{jozsaMatchgatesClassicalSimulation2008}; this can be achieved by adding the swap gate to next-neighbour matchgates. In future work, we plan to give a fully diagrammatic characterisation of matchgates and next-neighbour matchgates.
\end{rem}

As we will show in Section~\ref{dualrail}, the full expressiveness of quantum computation is restored by giving access to the \emph{projector} on the even subspace of two LFMs. We introduce a simplified notation for this component, which will play an important role in our axiomatisation:

\blankline%

\begin{minipage}{0.4\linewidth}
\input{img/s2_projector.tex}
\end{minipage}
\begin{minipage}{0.55\linewidth}
\begin{equation*}
	\ket{00} \mapsto \ket{00}, \qquad \ket{11} \mapsto \ket{11},
\end{equation*}
\begin{equation*}
	\ket{01}, \ket{10} \mapsto 0.
\end{equation*}
\end{minipage}

\blankline%

\noindent As the notation suggests, this corresponds to the quaternary white vertex of the original ZW calculus. Similarly to the black and white vertices already introduced, its interpretation is symmetric under transposition and swapping of inputs and outputs, so we can freely draw quaternary white vertices with a different partition of inputs and outputs.

\begin{rem}
Our calculus does not include measurements, probabilistic mixing, or any kind of classical control as internal operations. In future work, we hope to extend our axiomatisation to a mixed quantum-classical calculus, in the style of~\cite{coecke2007measurements} (see~\cite[Chapter 8]{coecke2017picturing} for a more recent version), incorporating all these elements.

For now, we can calculate the probability of detecting particles at the output ends of a circuit by closing the circuit with occupied and empty state diagrams; a closed circuit is then interpreted as a map $\mathbb{C} \to \mathbb{C}$, that is, a scalar. This will be the probability amplitude of detecting a particle where we have closed with an occupied state, and not detecting it where we have closed with an empty state.
\end{rem}

To reason rigorously about our diagrammatic calculus, we rely on the theory of PROs~\cite{lack2004composing}, strict monoidal categories that have $\mathbb{N}$ as set of objects, and monoidal product given, on objects, by the sum of natural numbers. Morphisms $n \to m$ in a PRO represent operations with $n$ inputs and $m$ outputs. Given a \emph{monoidal signature}, that is, a set of operations with arities $T := {\{f_i: n_i \to m_i\}}_{i \in I}$, one can generate the free PRO $F[T]$ on $T$, whose operations are free sequential and parallel compositions of the $f_i$, modulo the axioms of monoidal categories. By a classic result of Street and Joyal~\cite[Theorem 1.2]{joyal1991geometry}, this is equivalent to the PRO whose morphisms are obtained by horizontally juxtaposing and vertically plugging string diagrams with the correct arity, one for each generator, then quotienting by planar isotopy of diagrams. Thus, in the remainder, we will not distinguish between the two, identifying diagrams and operations.

\begin{defi}
Let $T$ be the monoidal signature with operations $\mathsf{swap}: 2 \to 2$, $\mathsf{dual}: 0 \to 2$, $\dagg{\mathsf{dual}}: 2 \to 0$, $\mathsf{fswap} : 2 \to 2$, $\mathsf{black}_2: 0 \to 2$, $\mathsf{black}_3: 0 \to 3$, and $\mathsf{white}_z: 0 \to 2$, for all $z \in \mathbb{C}$. The \emph{language} of the fermionic ZW calculus is the free PRO $F[T]$.
\end{defi}
The correspondence with the diagrammatic components we listed earlier should be self-explanatory, and their interpretation induces a monoidal functor $\textit{f}: F[T] \to \cat{LFM}$.

Given a set $E$ of equations between diagrams with the same arity in $F[T]$, we can quotient $F[T]$ by the smallest equivalence relation including $E$ and compatible with composition and monoidal product, to obtain a PRO $F[T/E]$, together with a quotient functor $p_E: F[T] \to F[T/E]$.

\begin{defi} The interpretation $f: F[T] \to \cat{LFM}$ is \emph{universal} if it is a full functor. A set $E$ of equations is \emph{sound} for $f: F[T] \to \cat{LFM}$ if $f$ factors through $p_E: F[T] \to F[T/E]$. A sound set of equations is \emph{complete} for $\cat{LFM}$ if $F[T/E]$ and $\cat{LFM}$ are equivalent monoidal categories.
\end{defi}

In the next section, we will introduce the axioms of the fermionic ZW calculus, in the form of equations between diagrams of $F[T]$; we will then show that they are complete for $\cat{LFM}$.


\section{Axioms and derived equations}\label{sec:axioms}

We divide the set $E$ of axioms into four groups, based on the generators to which they mainly pertain.
\begin{axgrp}\label{ax1} \emph{Structural axioms.}
\begin{equation*}
\input{img/s3_structural1.tex}
\end{equation*}
\begin{equation*}
\input{img/s3_structural2.tex}
\end{equation*}
\begin{equation*}
\input{img/s3_structural3.tex}
\end{equation*}
\begin{equation*}
\input{img/s3_structural4.tex}
\end{equation*}
Together, these axioms imply that the swap and dualities make $F[T/E]$ a compact closed category on a self-dual object. The Kelly-Laplaza coherence theorem~\cite[Theorem 8.2]{kelly1980coherence} then allows us to extend our topological reasoning to the swapping and transposition of wires.
\end{axgrp}

\begin{axgrp}\label{ax2} \emph{Axioms for the fermionic swap.}
\begin{equation*}
\input{img/s3_fermionic1.tex}
\end{equation*}
\begin{equation*}
\input{img/s3_fermionic2.tex}
\end{equation*}
\begin{equation*}
\input{img/s3_fermionic3.tex}
\end{equation*}
These axioms say that the fermionic swap behaves like a symmetric braiding in $F[T/E]$, except for the fact that sliding the black vertices (that is, the only odd generators) through a wire induces a ``self-crossing'' on it.

Moreover, the axioms on the interplay between the structural and fermionic swaps imply that only the number of fermionic swaps between two wires matters, and not their direction; which, as we will see, also implies that a sequence of two fermionic self-crossings on either side of a wire can be straightened.

Altogether, the result of these axioms is that any diagram containing an \emph{even} number of black vertices can slide past a wire through fermionic swaps with no other effect, while any diagram containing an \emph{odd} number of black vertices can do the same by introducing a self-crossing on the wire. As with the structural axioms, we will make use of this fact implicitly most of the time.
\end{axgrp}

\begin{rem}
Calculi with two different braidings are studied in virtual knot theory; see~\cite{kauffman2012virtual} for an introduction.
\end{rem}

\begin{axgrp}\label{ax3} \emph{Axioms for black vertices.}
\begin{equation*}
\input{img/s3_black1.tex}
\end{equation*}
\begin{equation*}
\input{img/s3_black2.tex}
\end{equation*}
\begin{equation*}
\input{img/s3_black3.tex}
\end{equation*}
\begin{equation*}
\input{img/s3_black4.tex}
\end{equation*}
These axioms say that the black vertices are symmetric under permutation of wires (which justifies, a posteriori, their arbitrary transposition), and that they can be assembled to form a (co)commutative (co)monoid. This (co)monoid has the property of forming a bialgebra (in fact, a Hopf algebra) with its own transpose. 

In the interpretation, this is the Hopf algebra known as \emph{fermionic line} in the theory of quantum groups~\cite[Example 14.6]{majid2002quantum}, whose comultiplication is given by $\ket{0} \mapsto \ket{00}$, $\ket{1} \mapsto \ket{10} + \ket{01}$. As discussed in~\cite[Section 5.3]{hadzihasanovic2017algebra}, the fermionic line has ``anyonic'' and ``bosonic'' analogues in every countable dimension, with the same self-duality property.

The final axiom says that $0$ times $0$ is $0$; it serves to ensure that there is a unique zero map, rather than an ``even'' and an ``odd'' zero.
\end{axgrp}

\begin{axgrp}\label{ax4} \emph{Axioms for white vertices.}
\begin{equation*}
\input{img/s3_white1.tex}
\end{equation*}
\begin{equation*}
\input{img/s3_white2.tex}
\end{equation*}
\begin{equation*}
\input{img/s3_white3.tex}
\end{equation*}
These axioms say that the binary white vertices are homomorphisms for the fermionic line algebra, and that composition and convolution by the algebra correspond to product and sum, respectively, of their complex parameters. Finally, the projector is symmetric under cyclic permutation of its wires, and it determines a kind of mixed action/coaction of the algebra on itself.
\end{axgrp}

\begin{rem}
Because $\cat{LFM}$ is a subcategory of the category $\cat{Qubit}$ of~\cite{lics2018zwzx}, all the axioms of the fermionic ZW calculus are sound for the original ZW calculus. Moreover, adding either the ternary or the unary white vertex from the original ZW calculus to our language would make it universal for $\cat{Qubit}$. We have not yet investigated, however, what axioms would need to be added to $E$ in the extended signature to make it complete.
\end{rem}

We state some useful derived equations, leaving the proofs to the Appendix.

\begin{prop}\label{black-derived}
The following equations hold in $F[T/E]$:
\begin{equation*}
\input{img/s3_fermionic-derived.tex}
\end{equation*}
\begin{equation*}
\input{img/s3_black-derived.tex}
\end{equation*}
\end{prop}

\begin{proof}
	Equation $(a)$ comes from the following manipulation:
	\begin{equation*}
		\input{img/a_leftrightcross.tex}
	\end{equation*}

	\noindent Equation $(b)$ then follows from $(a)$, combined with the equation

	\begin{equation*}
		\input{img/a_whitney.tex}
	\end{equation*}

	\noindent which is a consequence of the fermionic swap axioms by the Whitney trick~\cite[p.\ 484]{kauffman2001knots}. Equation $(c)$ is proved by the following argument:

	\begin{equation*}
		\input{img/a_cz-commute.tex}
	\end{equation*}

	\noindent whereas $(d)$ comes from
	\begin{equation*}
		\input{img/a_black-swap.tex}
	\end{equation*}

	\noindent finally using the symmetry of the black vertex under the structural swap.

	Equation $(e)$ is proved by the following argument:
	\begin{equation*}
		\input{img/a_hopf1.tex}
	\end{equation*}

	\noindent where we tacitly used Axiom~\ref{ax3}$.(d)$ to introduce or eliminate pairs of binary black vertices in several occasions.

	Finally, for equation $(f)$, start by considering that
	\begin{equation*}
		\input{img/a_minusone.tex}
	\end{equation*}

	\noindent by Axioms~\ref{ax3}$.(e)$ and~\ref{ax4}$.(d)$, this is equal to
	\begin{equation*}
	\input{img/a_minusone2.tex}
	\end{equation*}

	\noindent This completes the proof.
\end{proof}

The following Proposition shows that all the least natural-looking axioms about white vertices in the original ZW calculus become provable equations for (quaternary, rather than ternary) white vertices in the fermionic ZW calculus.

\begin{prop}\label{white-derived}
The following equations hold in $F[T/E]$:
\begin{equation*}
\input{img/s3_white-derived.tex}
\end{equation*}
\begin{equation*}
\input{img/s3_white-derived2.tex}
\end{equation*}
\begin{equation*}
\input{img/s3_white-derived3.tex}
\end{equation*}
\end{prop}

\begin{proof}
	Substituting the definition of the projector, Axiom~\ref{ax4}$.(h)$ becomes the following equation:
	\begin{equation} \label{a_projector}
	\input{img/a_projector.tex}
	\end{equation}
	Equations $(a)$ and $(a')$ are then immediate consequences of Proposition~\ref{black-derived}$.(c)$ and its transposes, applied to the right-hand side of (\ref{a_projector}).

	Equation $(b)$ is also immediate from the definition: because swaps slide through fermionic swaps and vice versa, we can slide one ``circle'' past another to get
	\begin{equation*}
		\input{img/a_white-associativity.tex}
	\end{equation*}

	\noindent For equations $(c)$, $(c')$, and $(c'')$, we use either of the forms in equation (\ref{a_projector}) and slide the binary white vertex through a fermionic swap using Axiom~\ref{ax2}$.(i)$, to move it to a different wire.

	Equation $(d)$ comes from
	\begin{equation*}
		\input{img/a_circle.tex}
	\end{equation*}

	\noindent finally applying Axiom~\ref{ax3}$.(i)$. Then, equation $(e)$ follows from it by
	\begin{equation*}
		\input{img/a_white-trace.tex}
	\end{equation*}

	\noindent In order to prove equation $(f)$, consider first that
	\begin{equation} \label{bw-loop}
	\input{img/a_bw-loop1.tex}
	\end{equation}
	and we can eliminate the circle by equation $(d)$. Then,
	\begin{equation*}
		\input{img/a_bw-loop2.tex}
	\end{equation*}

	\noindent Finally, for equation $(g)$, observe that the projector contains an even number of black vertices, hence it can slide past fermionic swaps with no other effect. Therefore,
	\begin{equation*}
		\input{img/a_uncross.tex}
	\end{equation*}

	\noindent This concludes the proof.
\end{proof}

Together with its invariance under cyclic permutation of wires, the first two equations justify the arbitrary transposition of inputs and outputs of the quaternary white vertex.

Our axioms form a sound and complete set of equations for $\cat{LFM}$, so in principle any equation of diagrams whose interpretations are equal can be derived from them. In practice, however, it is convenient to introduce further short-hand notation, including black vertices with $n$ wires and white vertices with $2n$ wires for all $n \in \mathbb{N}$, and derive inductive equation schemes to use directly in proofs.

\begin{enumerate}
\item \emph{Black vertices.} The nullary and unary black vertices are defined as follows:
\begin{equation*}
\input{img/s3_black-01.tex}
\end{equation*}
We already have binary and ternary black vertices. For $n > 3$, the $n$-ary black vertex is defined inductively, together with its interpretation in $\cat{LFM}$, by

\blankline%

\begin{minipage}{0.40\linewidth}
\input{img/s3_black-n.tex}
\end{minipage}
\begin{minipage}{0.50\linewidth}
\begin{equation*}
	1 \;\mapsto \;\sum_{k=1}^n \ket{\underbrace{0\ldots 0}_{k-1}1\underbrace{0\ldots 0}_{n-k}\,}.
\end{equation*}
\end{minipage}

\blankline%

\noindent Here, and in what follows, lighter wires and vertices indicate the repetition of a pattern for a number of times, which may or may not be specified. This is similar to the the way that ``$\ldots$'' is often used, and can be formalised using $!$-boxes in pattern graphs, as developed in~\cite{kissinger2014pattern}. 

\item[(2)] \emph{White vertices.} The nullary white vertex with parameter $z \in \mathbb{C}$ is defined by
\begin{equation*}
\input{img/s3_white-0.tex}
\end{equation*}
We already have binary white vertices with parameters. For $n > 1$, the $2n$-ary white vertex with parameter $z \in \mathbb{C}$ is defined inductively, together with its interpretation in $\cat{LFM}$, by

\blankline%

\begin{minipage}{0.40\linewidth}
\input{img/s3_white-2n.tex}
\end{minipage}
\begin{minipage}{0.50\linewidth}
\begin{equation*}
	1 \; \mapsto \; \ket{\underbrace{0\ldots 0}_{2n}} + z \, \ket{\underbrace{1\ldots 1}_{2n}}\,.
\end{equation*}
\end{minipage}

\blankline%

\end{enumerate}
We state some basic properties of black and white vertices. Both are symmetric under permutation of wires, which allows us to write vertices with different numbers of inputs and outputs, transposing some of them with no ambiguity. Most importantly, they satisfy certain ``fusion'' equations, as shown on the second and third line. All black vertices correspond to odd maps, while white vertices correspond to even maps, as reflected in their sliding through fermionic swaps, on the fourth line; finally, black vertices are unaffected by fermionic swaps of their wires, whereas the sign of the white vertex parameter is flipped.
\begin{prop}\label{prop:spider}
The following equations hold in $F[T/E]$ for black and white vertices of any arity:
\begin{equation*}
\input{img/s3_spider-swap.tex}
\end{equation*}
\begin{equation*}
\input{img/s3_spider-fusion.tex}
\end{equation*}
\begin{equation*}
\input{img/s3_spider-trace.tex}
\end{equation*}
\begin{equation*}
\input{img/s3_spider-slide.tex}
\end{equation*}
\begin{equation*}
\input{img/s3_spider-fermionic.tex}
\end{equation*}
\end{prop}

\begin{proof}
	All the equations are proved by induction on the arity of the vertices involved.

	For equation $(a)$, let $n$ be the number of outputs of the black vertex. For $n = 0,1$ there is nothing to prove, and for $n = 2,3$ these are Axioms~\ref{ax3}$.(a)$, $(b)$, and $(b')$. For $n > 3$, if the two swapped wires are the rightmost ones, the equation follows immediately from the ternary case; otherwise, use Axiom~\ref{ax3}$.(e)$ on the three rightmost wires, and apply the inductive hypothesis.

	For equation $(a')$, let $2n$ be the number of outputs of the white vertex. For $n=0$ there is nothing to prove, and $n = 1$ is Axiom~\ref{ax4}$.(a)$. For $n > 1$, observe that by Proposition~\ref{white-derived}$.(c)$, $(c')$ and $(c'')$, we can always move the binary vertex with parameter $z$ to a wire which is not swapped. The case $n = 2$ then follows from the combination of Axiom~\ref{ax4}$.(h)$ with Proposition~\ref{white-derived}$.(a)$ and $(a')$. For $n > 2$, if the swapped wires are among the three rightmost ones, the equation follows from the case $n = 2$; otherwise, use Proposition~\ref{white-derived}$.(b)$ (with some wires transposed) on the two rightmost quaternary white vertices, and apply the inductive hypothesis.

	Equations $(a)$ and $(a')$ justify the unambiguous writing of $n$-ary vertices with inputs as well as outputs in equations $(b)$ and $(b')$, and the latter will follow from the all-output case. In equation $(b)$, let $n, m > 0$ be the arities of the leftmost and rightmost vertex, respectively. If $n = 1$, the equation follows from Axiom~\ref{ax3}$.(f)$, and if $n = 2$ from Axiom~\ref{ax3}$.(d)$. Suppose $n > 2$. Then, if $m = 1$, the equation follows from Axiom~\ref{ax3}$.(f)$, and if $m = 2$ from Axiom~\ref{ax3}$.(d)$. All other cases are just immediate from the definition. Equation $(b')$ also follows from the definition, together with Proposition~\ref{white-derived}$.(c)$, $(c')$ and $(c'')$ in order to move the vertex with parameter $w$ to the wire where the vertex with parameter $z$ is, and Axiom~\ref{ax4}$.(g)$ to multiply the two.

	In equation $(c)$, let $n > 1$ be the arity of the black vertex in the left-hand side. If $n = 2, 3$ the equation is true by definition. If $n > 3$, by equation $(a)$, we can assume the two wires plugged into each other are the two rightmost ones; the equation then follows from Axiom~\ref{ax3}$.(f)$. For equation $(c')$, let $2n > 1$ be the arity of the white vertex in the left-hand side. If $n = 1$, the equation is true by definition, and if $n = 2$ it follows from Proposition~\ref{white-derived}$.(d)$, together with Proposition~\ref{white-derived}$.(c)$, $(c')$ and $(c'')$ to move the vertex with parameter $z$ out of the way. For $n > 2$, by equation $(a')$, we can assume the two wires plugged into each other are the two rightmost ones, and the equation follows from the case $n = 2$.

	Equation $(d)$ is a consequence of Axioms~\ref{ax2}$.(g)$ and $(h)$, together with Proposition~\ref{black-derived}$.(b)$ to eliminate pairs of self-crossings. Equation $(d')$ is a consequence of Axiom~\ref{ax2}$.(i)$ together with the definition of the quaternary white vertex.

Equation $(e)$ follows from equation $(a)$ by Axiom~\ref{ax3}$.(c)$ and Proposition~\ref{black-derived}$.(d)$. For equation $(e')$, let $2n > 1$ be the arity of the white vertex. The case $n = 1$ is a consequence of Proposition~\ref{black-derived}$.(f)$, and $n = 2$ follows from the following argument:

	\begin{equation*}
		\input{img/a_white-negate.tex}
	\end{equation*}
applied to the definition of the quaternary white vertex, as in the right-hand side of (\ref{a_projector}). All other cases follow from this one, by symmetry.
\end{proof}

Several other equations, both axioms and derived, admit inductive generalisations; we list them in the following Proposition.

\begin{prop}\label{prop:inductive}
The following equations hold in $F[T/E]$ for black and white vertices of any compatibile arities:
\begin{equation*}
\input{img/s3_inductive1.tex}
\end{equation*}
\begin{equation*}
\input{img/s3_inductive2.tex}
\end{equation*}
\begin{equation*}
\input{img/s3_inductive3.tex}
\end{equation*}
\end{prop}

\begin{proof}
	In equation $(a)$, let $n$ be the number of inputs, and $m$ the number of outputs of the diagrams. The case $n = m = 0$ is Axiom~\ref{ax3}$.(i)$, and when either $n$ or $m = 1$, the equation follows from Axiom~\ref{ax3}$.(d)$. The cases $n = 0, m > 1$ and $m = 0, n > 1$ are simple inductive generalisations of Axiom~\ref{ax3}$.(h)$. Finally, the case $n = m = 2$ is Axiom~\ref{ax3}$.(g)$, and from there we can proceed by double induction on $n$ and $m$, using Proposition~\ref{prop:spider}$.(b)$.

	In equation $(b)$, let $n$ be the number of inputs, and $2m-1$, for $m > 0$, the number of outputs. Suppose first that $m = 1$. The case $n = 0$ is Axiom~\ref{ax4}$.(c)$, the case $n = 1$ follows from Axiom~\ref{ax3}$.(d)$, and the case $n = 2$ is Axiom~\ref{ax4}$.(b)$; then, for $n > 2$, it is a simple induction starting for the latter. In the case $n = 0$ and $m = 2$,
	\begin{equation*}
		\input{img/a_bialgebra1.tex}
	\end{equation*}

	\noindent by Proposition~\ref{prop:spider}$.(b')$ and $(c')$, the latter is equal to
	\begin{equation*}
		\input{img/a_bialgebra2.tex}
	\end{equation*}

	\noindent The cases $n = 0$, $m > 2$ are simple inductive generalisations of this one. All cases with $n = 1$ follow from Axiom~\ref{ax3}$.(d)$, and the case $n = m = 2$ is Axiom~\ref{ax4}$.(i)$. For $n, m > 2$, proceed by double induction, using Proposition~\ref{prop:spider}$.(b)$ and $(b')$.

	For equation $(c)$, by Proposition~\ref{prop:spider}$.(b)$ it suffices to prove
	\begin{equation*}
		\input{img/a_sum.tex}
	\end{equation*}

	\noindent which for $n = 0$ is Axiom~\ref{ax4}$.(e)$, for $n = 1$ follows from Axiom~\ref{ax3}$.(d)$, for $n = 2$ is Axiom~\ref{ax3}$.(f)$, and for $n > 2$ is a simple inductive generalisation of the latter.

	Similarly, for equation $(d)$, it suffices, by Proposition~\ref{prop:spider}$.(b)$ and $(b')$, to prove
	\begin{equation*}
		\input{img/a_loop.tex}
	\end{equation*}

	\noindent when $m = 0, 1, 2$. If $m = 2$, and $n = 2$, this is Proposition~\ref{white-derived}$.(f)$, and for $n > 2$ we can proceed by induction, as follows:
	\begin{equation*}
		\input{img/a_loop2.tex}
	\end{equation*}

	\noindent The case $m = 0$, for arbitrary $n > 1$, follows from this one, by
	\begin{equation*}
		\input{img/a_loop3.tex}
	\end{equation*}

	\noindent and similarly for the case $m = 1$, where necessarily $n > 2$, by
	\begin{equation*}
		\input{img/a_loop4.tex}
	\end{equation*}

	\noindent Finally, equation $(e)$ is an immediate generalisation of Proposition~\ref{white-derived}$.(g)$, using Proposition~\ref{prop:spider}$.(b)$ and $(b')$.
\end{proof}

\begin{rem}
Some of these inductive schemes subsume several axioms at once: for example, Proposition~\ref{prop:inductive}$.(a)$ has Axioms~\ref{ax3}$.(g)$,~\ref{ax3}$.(h)$, and~\ref{ax3}$.(i)$ as special cases, and Proposition~\ref{prop:inductive}$.(b)$ has Axioms~\ref{ax4}$.(b)$,~\ref{ax4}$.(c)$, and~\ref{ax4}$.(i)$ as special cases.

It is possible to perform diagram rewriting directly with such equation schemes, using for example the theory of pattern graphs developed in~\cite{kissinger2014pattern}.
\end{rem}

In the next section, we will use these equations to prove that our axioms are complete for $\cat{LFM}$.


\section{Normal form and completeness}\label{sec:completeness}

We will prove completeness in three stages:
\begin{enumerate}
	\item First, we will associate to any state $v: \mathbb{C} \to B^{\otimes n}$ of $\cat{LFM}$ a diagram $g(v): 0 \to n$ in $F[T]$ such that $f(g(v)) = v$. Because both categories are compact closed, and the dualities of $\cat{LFM}$ are in the image of $f$, this assignment can be extended to any map of $\cat{LFM}$, proving universality of our interpretation. We will say that a diagram is in \emph{normal form} if it is of the form $g(v)$ for some $v$.
	\item Then, we will show that any composite of diagrams in normal form can be rewritten in normal form using the equations in $E$, proving that $g$ determines a monoidal functor from $\cat{LFM}$ to $F[T/E]$.
	\item Finally, we will show that all the generators of $F[T]$ can be rewritten in normal form using the equations in $E$, proving that $g$ and $f$ are two sides of a monoidal equivalence between $F[T/E]$ and $\cat{LFM}$.
\end{enumerate}

\begin{thm}[Universality]\label{thm:universality} 
The functor $f: F[T] \to \cat{LFM}$ is full.
\end{thm}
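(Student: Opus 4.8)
The plan is to follow the three-stage outline, with essentially all the content concentrated in stage~1 (realizability of states); stages~2 and~3 are comparatively routine. First I would record the reduction to states. Since $\cat{LFM}$ is dagger compact closed with each $B^{\otimes n}$ self-dual (\S\ref{model}), and since the cup and cap $\mathsf{dual}\colon 0\to 2$ and $\dagg{\mathsf{dual}}\colon 2\to 0$ are themselves generators of $F[T]$ and hence lie in the image of $f$, any morphism $h\colon B^{\otimes n}\to B^{\otimes m}$ is the transpose (``name'') of the state $\lceil h\rceil\colon \mathbb{C}\to B^{\otimes(n+m)}$ obtained by bending all inputs down along cups. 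A preimage $g(\lceil h\rceil)$ of this state then yields a preimage of $h$ by bending the first $n$ wires back up with caps, all of which are realized by $\dagg{\mathsf{dual}}$. Thus fullness of $f$ reduces to the claim that every state $v\colon \mathbb{C}\to B^{\otimes n}$ lies in the image of $f$.

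Next I would reduce to a single parity sector. Every pure state is homogeneous, so $v=\sum_{S}c_S\ket{x_S}$ with $S$ ranging over subsets of $\{1,\dots,n\}$ of a fixed parity, where $\ket{x_S}$ is the basis state occupied exactly on $S$. Each monomial $\ket{x_S}$ is realized as a tensor product of empty- and occupied-state preparations, and an arbitrary scalar $c\in\mathbb{C}$ (including $0$) is realized by a nullary white vertex with parameter $c-1$ (resp.\ $-1$), so every term $c_S\ket{x_S}$ is realizable. Because the even (resp.\ odd) states form a linear subspace of $B^{\otimes n}$ spanned by such monomials, and the realizable states are closed under scalar multiplication, the theorem is equivalent to the statement that the realizable states of each fixed parity are \emph{closed under addition}. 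Verifying this, together with a canonical choice $g(v)$, is precisely the normal-form construction of \S\ref{sec:completeness}; I would set up $g$ by recursion on $n$ via the splitting $v=\ket{0}\otimes v_0+\ket{1}\otimes v_1$, where for homogeneous $v$ the summands $v_0,v_1$ have opposite parity on $n-1$ wires, using a control-pair gadget: prepare $\ket{00}+\ket{11}$ with a binary white vertex of parameter~$1$, attach the diagrams for $v_0$ and $v_1$ to its two branches, and contract the control pair against $\bra{00}+\bra{11}$. The spider, fusion and sliding equations of Propositions~\ref{prop:spider} and~\ref{prop:inductive}, together with Axiom group~\ref{ax2}, are exactly what is needed to push the black and white vertices of the attached subdiagrams past the control wires and to verify that the resulting interpretation is $v$ with no spurious signs.

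The hard part will be this addition/merge step, and it is genuinely unavoidable rather than a matter of bookkeeping: there is no fixed realizable map that adds two arbitrary states, since plugging states into a fixed linear map is bilinear in the two inputs, whereas $(u,w)\mapsto u+w$ is not (it fails already at $w=0$). Consequently $g(v)$ cannot be assembled from $g(v_0)$ and $g(v_1)$ through a single gadget independent of $v_0,v_1$; the recursion must thread the subdiagrams through a shared, later-contracted control, and its correctness rests on the fermionic rewriting lemmas to govern the parity signs introduced whenever the branches' wires cross. A subsidiary obstacle is keeping the construction parity-respecting throughout, so that every intermediate diagram has a well-defined even/odd interpretation and the $\bra{00}+\bra{11}$ contraction lands back in the intended sector.

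Once $g$ is defined on all states with $f(g(v))=v$, Theorem~\ref{thm:universality} follows immediately from the compact-closed reduction of the first paragraph: every morphism of $\cat{LFM}$ is the image under $f$ of the diagram obtained by bending $g(\lceil h\rceil)$ back into a map.
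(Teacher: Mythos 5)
Your reduction of fullness to the realizability of states via compact closure is exactly the paper's first step. But from there the paper's proof is much shorter and more direct than what you outline: it writes $v$ as $\sum_{i=1}^m z_i\ket{b_{i1}\ldots b_{in}}$ and exhibits, in one shot, an explicit diagram $g(v)$ --- an $m$-ary black vertex at the bottom (interpretation $1\mapsto\sum_{k=1}^m\ket{e_k}$, which hard-codes the $m$-term sum), whose $i$-th leg enters a $2k$-ary white vertex $1\mapsto\ket{0\cdots0}+z_i\ket{1\cdots1}$ wired to exactly the outputs $j$ with $b_{ij}=1$, with each output capped by a binary black (NOT) vertex; one then checks $f(g(v))=v$ by a direct computation in $\cat{LFM}$. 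Your bilinearity observation correctly shows there is no single binary ``adder'' diagram, but the conclusion that one must therefore recurse through a shared control is not forced: an $m$-dependent family of diagrams suffices, and that is what the normal form provides. A related conceptual point: universality is a statement about the interpretation functor $f$ on the \emph{free} PRO $F[T]$, so no axioms, no rewriting, and in particular none of Propositions~\ref{prop:spider} and~\ref{prop:inductive} or Axiom group~\ref{ax2} are needed to verify $f(g(v))=v$; invoking them here imports machinery that belongs to the completeness proof and inverts the paper's logical order (those lemmas are proved \emph{after} universality and are used only to rewrite arbitrary diagrams into normal form).

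The genuine gap is in your merge step. ``Prepare $\ket{00}+\ket{11}$, attach the diagrams for $v_0$ and $v_1$ to its two branches, and contract against $\bra{00}+\bra{11}$'' is not a well-defined operation on diagrams: $g(v_0)$ and $g(v_1)$ are states on $n-1$ wires, and what your recursion actually requires is a \emph{controlled} preparation (output $v_0$ if the control is $\ket{0}$ and $v_1$ if it is $\ket{1}$). Constructing such a controlled gadget from $g(v_0)$ and $g(v_1)$ is precisely the nontrivial content, and it is asserted rather than built; as stated, simply tensoring both subdiagrams onto the branches of an EPR pair would produce $v_0\otimes v_1$-type cross terms rather than the sum $\ket{0}\otimes v_0+\ket{1}\otimes v_1$. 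You would also need to track that $v_0$ and $v_1$ have opposite parities, so the two branches cannot receive the same shape of gadget. None of this is unfixable --- the paper's normal form can be read as the fully unrolled, non-recursive solution to exactly this problem --- but the proposal as written does not contain the construction that the theorem needs.
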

\begin{proof}
Write an arbitrary state $v: \mathbb{C} \to B^{\otimes n}$ in the form $1 \mapsto \sum_{i=1}^m z_i \ket{b_{i1}\ldots b_{in}}$, where $z_i \neq 0$ for all $i$, and no pair of $n$-tuples $(b_{i1},\ldots,b_{in})$ is equal; we can fix an ordering (for example, lexicographic) on $n$-tuples of bits to make this unique. Then, define
\begin{equation} \label{normalform}
\input{img/s4_normalform.tex}
\end{equation}
where, for $i = 1,\ldots, m$ and $j= 1,\ldots, n$, the dotted wire connecting the $i$-th white vertex to the $j$-th output is present if and only if $b_{ij} = 1$. The definition is only ambiguous if $v = 0$, in which case we arbitrarily pick one of the two forms; they will be equal in $F[T/E]$ by Axiom~\ref{ax3}$.(j)$.

Because for all summands of an odd (respectively, even) state $v$, we have $b_{ij} = 1$ for an odd (respectively, even) number of bits, the white vertices in $g(v)$ have an odd (respectively, even) number of outputs. The two distinct forms of $g(v)$ for odd and even states ensure that only white vertices with an even arity appear.

It can then be checked that $f(g(v)) = v$, which, by our earlier remark, suffices to prove the statement.
\end{proof}

\begin{defi}
A string diagram of $F[T]$ is in \emph{normal form} if it is $g(v)$ for some state $v$ of $\cat{LFM}$.
It is in \emph{pre-normal form} if it has one of the two forms in (\ref{normalform}), where the following are also allowed:
\begin{itemize}
	\item the white vertices can be in an arbitrary order;
	\item two or more white vertices may be connected to the exact same outputs;
	\item $z_i$ may be 0 for some $i$.
\end{itemize}
\end{defi}
\begin{exa}
The following diagrams are in pre-normal form, but only the second one is in normal form:
\begin{equation*}
\input{img/s4_prenormal.tex}
\end{equation*}
Both are interpreted in $\cat{LFM}$ as the state $1 \mapsto i\ket{01} + 2\ket{10}$.
\end{exa}

The completeness proof closely follows that of the qubit ZW calculus~\cite[Section 5.2]{hadzihasanovic2017algebra}. The one proof that is significantly different is the following. We take the liberty of ``zooming in'' on a certain portion of a diagram, which may require some reshuffling of vertices, using swapping or transposition of wires, with the implicit understanding that this can always be reversed later.

\begin{lem}[Negation]\label{lem:negation}
The composition of one output of a diagram in pre-normal form with a binary black vertex can be rewritten in pre-normal form, and that has the effect of ``complementing'' the connections of the output to white vertices: that is, locally,
\begin{equation*}
\input{img/s4_negation.tex}
\end{equation*}
\end{lem}
\begin{rem}
In the picture, the dotted wires can stand for a multitude of wires. The version where the original diagram is odd, rather than even, is obtained by composing again both sides with a binary black vertex and using Axiom~\ref{ax3}$.(d)$.
\end{rem}
\begin{proof}
Using the ``fusion rules'' Proposition~\ref{prop:spider}$.(b)$ and $(b')$, we rewrite the left-hand side as
\begin{equation*}
\input{img/s4_negationproof1.tex}
\end{equation*}
By definition of the quaternary white vertex, this is equal to
\begin{equation*}
\input{img/s4_negationproof2.tex}
\end{equation*}
where we made implicit use of some symmetry properties of vertices. Now, fusing black vertices, and using Proposition~\ref{prop:spider}$.(d)$ and $(d')$ to move the closed loop to the outside of the main diagram, we see that this is equal to
\begin{equation*}
\input{img/s4_negationproof3.tex}
\end{equation*}
and we can conclude by Proposition~\ref{black-derived}$.(b)$ and~\ref{white-derived}$.(e)$.
\end{proof}

In the following, and later statements, ``plugging one output of a diagram into another'' means a post-composition with $\dagg{\mathsf{dual}}: 2 \to 0$, possibly after some swapping of wires.

\begin{lem}[Trace]\label{lem:trace}
The plugging of two outputs of a diagram in pre-normal form into each other can be rewritten in pre-normal form.
\end{lem}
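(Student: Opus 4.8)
The plan is to recognise the plugging operation as post-composition with a single, already-understood generator, and then to absorb that generator using the interaction between the two colours. First I would observe that the cap $\dagg{\mathsf{dual}}\colon 2 \to 0$ is nothing but a binary white vertex of parameter $1$, read as a map $2 \to 0$: by Axiom~\ref{ax4}$.(d)$ the binary white vertex with parameter $1$ is the identity wire, and bending it with the dualities (legitimate by the structural Axioms~\ref{ax1}) produces exactly $\dagg{\mathsf{dual}}$. Hence plugging the two outputs $j_1, j_2$ of a pre-normal diagram into each other is the same as joining them through a binary white vertex of parameter $1$. After a reshuffle putting $j_1$ and $j_2$ adjacent, I would then apply the black fusion rule Proposition~\ref{prop:spider}$.(b)$ to collapse the black vertices sitting on each of these two outputs into single black vertices $\bullet_1$ and $\bullet_2$, whose remaining legs run down precisely to those white vertices $W_i$ with $b_{ij_1}=1$, respectively $b_{ij_2}=1$.

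The configuration to analyse is then a \emph{black--white--black} sandwich: the comultiplication vertices $\bullet_1,\bullet_2$ joined at the top by the parameter-$1$ white vertex. This is exactly the pattern governed by the complementarity of the two colours, and the heart of the argument is the identity that capping two black ``EPR'' vertices through a binary white vertex of parameter $1$ yields a binary white vertex of parameter $1$ joining their remaining legs; one checks this directly in the two- and three-leg base cases and extends it by the bialgebra laws. Concretely, I would push the white cap downwards through the two black comultiplications using the homomorphism Axiom~\ref{ax4}$.(b)$, its inductive generalisation Proposition~\ref{prop:inductive}$.(b)$, and the black bialgebra law Proposition~\ref{prop:inductive}$.(a)$, thereby relocating the white vertex so that it acts directly on the existing white vertices $W_i$. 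These are then recombined by white fusion Proposition~\ref{prop:spider}$.(b')$ together with the multiplication Axiom~\ref{ax4}$.(g)$.

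The net effect, which is forced by the interpretation $v \mapsto \sum_{i:\,b_{ij_1}=b_{ij_2}} z_i\,\ket{\,b_i\text{ with }j_1,j_2\text{ deleted}\,}$, is that a white vertex with $b_{ij_1}=b_{ij_2}$ survives with its two traced legs removed and its parameter unchanged, whereas a white vertex connected to exactly one of $j_1,j_2$ is driven to parameter $0$ via the value Axioms~\ref{ax4}$.(d)$ and~\ref{ax4}$.(e)$. Since the definition of pre-normal form tolerates vanishing parameters and repeated connections, the result is again in pre-normal form with $j_1,j_2$ absent; a final reshuffle, and if needed the Negation Lemma~\ref{lem:negation}, tidies the connection pattern. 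I expect the main obstacle to be bookkeeping rather than conceptual: the black vertices are the odd generators, so moving them past wires during the distribution step introduces fermionic self-crossings governed by Axioms~\ref{ax2}, and one must verify that the signs these generate cancel consistently (as they must, the cap carrying no sign), while simultaneously checking that the $b_{ij_1}\neq b_{ij_2}$ contributions genuinely collapse to parameter $0$. Controlling these self-crossings as the cap is pushed through a comultiplication of arbitrary arity is the delicate part of the computation.
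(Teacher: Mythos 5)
There is a genuine gap at the heart of your argument. The identity you rely on --- that capping two black vertices through a binary white vertex of parameter $1$ yields a binary white vertex of parameter $1$ on their remaining legs --- holds only when both black vertices are binary. Already in your own ``three-leg base case'' (one binary and one ternary black vertex) the composite is the costate $\bra{000}+\bra{101}+\bra{011}$ on the three remaining legs, i.e.\ a ternary black vertex with a binary black vertex on one leg: a genuinely tripartite, \emph{black} structure, not a white vertex tensored with anything. So there is nothing to ``extend by the bialgebra laws''. Relatedly, the moves you cite do not do what you need them to: Axiom~\ref{ax4}$.(b)$ and Proposition~\ref{prop:inductive}$.(b)$ push a white vertex through a black comultiplication only when a copy of that white vertex sits on \emph{every} leg of the comultiplication (the homomorphism law), which is not the configuration produced by a single cap lying across the tops of two comultiplications.

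The second gap is the claimed ``net effect''. A white vertex connected to exactly one of $j_1,j_2$ must drop out of the sum, and you propose that it is ``driven to parameter $0$ via Axioms~\ref{ax4}$.(d)$ and~\ref{ax4}$.(e)$''; but those axioms only assert that parameter $1$ is the identity and that parameter $0$ \emph{equals} a broken wire --- nothing in your sequence of rewrites (white fusion and Axiom~\ref{ax4}$.(g)$ only multiply existing parameters) ever manufactures the scalar $0$. The elimination in the paper is structural, not arithmetical: the Negation Lemma~\ref{lem:negation} is applied to one of the two outputs (complementing its connections to white vertices), the two output black vertices are then fused, and the white vertices connected to exactly one output are seen to acquire \emph{two} legs running to the same black vertex, whereupon Proposition~\ref{prop:inductive}$.(d)$ annihilates them; a second application of the negation lemma disposes of the remaining singly-connected group. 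In your plan the negation lemma is relegated to final ``tidying'', whereas it is the indispensable mechanism; without it, or a substitute for it, the singly-connected white vertices cannot be removed.
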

\begin{proof}
We consider the case of an odd diagram; the even case is completely analogous. Focus on the two relevant outputs, and subdivide the white vertices into four groups, based on their being connected to both outputs, to one of them, or neither of them:
\begin{equation*}
\input{img/s4_trace.tex}
\end{equation*}
Again, the dotted wires can stand for a multitude of wires. Using the negation lemma on the rightmost output, this becomes
\begin{equation*}
\input{img/s4_trace2.tex}
\end{equation*}
where there are now two black vertices (not pictured) at the bottom, one leading to all the left-hand inputs, and one leading to all the right-hand inputs of the white vertices.

After fusing the pictured black vertices, the leftmost $n$ white vertices have two wires connecting them to the same black vertex. This means that Proposition~\ref{prop:inductive}$.(d)$ is applicable, leaving us with
\begin{equation*}
\input{img/s4_trace3.tex}
\end{equation*}
The leftmost black vertices can be fused with any black vertices they are connected to, or eliminated with Axiom~\ref{ax3}$.(i)$ if there is none, which rids us of the leftmost $n$ white vertices. Now, we can apply the negation lemma again, to find that the remaining portion of the diagram is equal to
\begin{equation*}
\input{img/s4_trace4.tex}
\end{equation*}
which, focussing on the rightmost part, is equal, by Proposition~\ref{prop:inductive}$.(a)$, to
\begin{equation*}
\input{img/s4_trace5.tex}
\end{equation*}
This rids us of the rightmost $q$ white vertices, and leaves us with a diagram in pre-normal form.
\end{proof}

The nullary black vertex is interpreted as the scalar 0; the following lemma shows that it acts as an ``absorbing element'' for diagrams in pre-normal form.

\begin{lem}[Absorption]\label{lem:absorb}
For all diagrams in pre-normal form, the following equation holds in $F[T/E]$:
\begin{equation} \label{eq:absorbed}
\input{img/s4_absorption.tex}
\end{equation}
\end{lem}
\begin{proof}
If the diagram is even, expanding the nullary black vertex, we can treat it as an additional output of the diagram, with no connections to the white vertices, composed with a unary black vertex; then the proof is the same as~\cite[Lemma 5.25]{hadzihasanovic2017algebra}.

Suppose the diagram is odd. If it has at least one output wire, we can freely introduce two binary black vertices on it; applying the negation lemma once, we obtain a negated even diagram, to which the first part of the proof can be applied. Another application of the negation lemma, followed by Axiom~\ref{ax3}$.(j)$, produces the desired equation. If the diagram has no outputs, it necessarily consists of a single nullary black vertex, and the statement follows immediately from Axiom~\ref{ax3}$.(j)$.
\end{proof}

\begin{lem}[Functoriality of the normal form]\label{thm:functoriality}
Any composition of two diagrams in pre-normal form can be rewritten in pre-normal form.
\end{lem}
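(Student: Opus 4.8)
The plan is to reduce composition of two pre-normal-form diagrams to a sequence of the three operations already analysed in the preceding lemmas: negation, trace, and absorption. Recall that to compose $D_1: 0 \to n$ with $D_2: 0 \to k$ as morphisms means, in the compact closed setting, that we must identify the relevant outputs — concretely, if we want to compose $D_1$ viewed as a map $a \to b$ with $D_2$ viewed as a map $b \to c$, we bend all wires down so that both become states, place them side by side to form a single state of $0 \to (n+k)$, and then plug the $b$ common wires of $D_1$ into the corresponding $b$ wires of $D_2$ using $\dagg{\mathsf{dual}}$.

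First I would observe that the parallel juxtaposition (monoidal product) of two pre-normal forms is itself immediately a pre-normal form: one simply takes the disjoint union of the white vertices and the outputs, with a single black vertex feeding all the white vertices. For the odd-odd case this requires a small adjustment, since juxtaposing two odd diagrams gives an even diagram; but the two forms in $(\ref{normalform})$ are designed precisely so that merging the two bottom black vertices (via Proposition~\ref{prop:spider}$.(b)$, the fusion rule) yields a single black vertex of the correct parity, after which we are back in pre-normal form. So the tensor product step is routine.

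The substantive content is then iterated plugging. Having exhibited the tensored diagram in pre-normal form, composition amounts to plugging pairs of its outputs into each other, one pair at a time. But that is exactly the content of Lemma~\ref{lem:trace}: the plugging of two outputs of a pre-normal-form diagram into each other can be rewritten in pre-normal form. I would therefore argue by induction on the number of wires to be contracted: the base case (no contractions) is the tensor product already in pre-normal form, and each inductive step applies the Trace lemma once to absorb one more plugged pair back into pre-normal form. Since each application of Lemma~\ref{lem:trace} returns a genuine pre-normal-form diagram, the hypothesis is maintained and the induction goes through cleanly.

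The main obstacle I anticipate is bookkeeping rather than conceptual: I must make sure the wires being plugged are actually \emph{outputs} in the sense required by Lemma~\ref{lem:trace}, which presupposes the freedom to reshuffle vertices and transpose or swap wires so that the two chosen wires are adjacent and available. This is exactly the ``zooming in'' licence invoked before Lemma~\ref{lem:negation}, justified by the compact closed structure established in Axioms~\ref{ax1} together with the Kelly--Laplaza coherence theorem, so it is legitimate, but I should flag it explicitly. A secondary point is the degenerate case where plugging produces the scalar $0$ (for instance when a loop forces contradictory occupation values); here the Absorption lemma, Lemma~\ref{lem:absorb}, guarantees that a nullary black vertex appearing in the diagram collapses everything to the canonical zero pre-normal form, so no special handling is needed beyond citing it. With these two caveats addressed, the proof is a short induction assembling the earlier lemmas.
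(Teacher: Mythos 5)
There is a genuine gap, and it sits exactly where you declare the step ``routine.'' The juxtaposition of two pre-normal forms is \emph{not} immediately a pre-normal form, and taking ``the disjoint union of the white vertices \ldots with a single black vertex feeding all the white vertices'' produces the wrong state. The bottom black vertex in the (pre-)normal form is interpreted as $\sum_k \ket{0\ldots 0 1 0 \ldots 0}$, i.e.\ it superposes its legs: a pre-normal form with white parameters $z_1,\ldots,z_m$ denotes the \emph{sum} $\sum_i z_i\ket{b_{i1}\ldots b_{in}}$. Merging the two diagrams' bottom vertices into one therefore yields (roughly) $\sum_i z_i\ket{b_i}\otimes\ket{0\ldots0} + \sum_j z'_j\ket{0\ldots0}\otimes\ket{b'_j}$, whereas the tensor product is $\sum_{i,j} z_i z'_j \ket{b_i}\otimes\ket{b'_j}$ and its pre-normal form must have $n\cdot m$ white vertices with parameters $z_i z'_j$, one for each pair. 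Producing that distributed form is the substantive content of the lemma: the paper first uses Axiom~\ref{ax3}$.(i)$ and the Negation lemma to create a pair of plugged outputs joining the two components, then invokes the black bialgebra law (Proposition~\ref{prop:inductive}$.(a)$), pushes each white vertex through the resulting black vertices (Proposition~\ref{prop:inductive}$.(b)$), converts the resulting tangle of fermionic swaps into structural swaps (Proposition~\ref{prop:inductive}$.(e)$), and finally fuses paired white vertices to obtain the products $z_i z'_j$. None of this is recoverable from the Trace, Negation, and Absorption lemmas alone as you use them.

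A secondary problem is the odd--odd case: you propose merging the two bottom black vertices ``via Proposition~\ref{prop:spider}$.(b)$,'' but the fusion rule only applies to black vertices joined by a wire, and the two components of a juxtaposition are disconnected; the paper instead introduces black vertices on output wires and uses the Negation lemma to reduce to the even case (and the Absorption lemma when an odd factor has no outputs at all). Your overall skeleton --- factor the composition as a tensor product followed by iterated self-pluggings handled by Lemma~\ref{lem:trace} --- does match the paper's, and the bookkeeping caveats you flag are legitimate; but the tensor-product step is where essentially all of the work lives, and as written it would not go through.
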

\begin{proof}
We can factorise any composition of diagrams in pre-normal form as a tensor product followed by a sequence of ``self-pluggings''; thus, by the trace lemma, it suffices to prove that a tensor product --- diagrammatically, the juxtaposition of two diagrams in pre-normal form --- can be rewritten in pre-normal form.

Suppose first that the two diagrams are both even. Then, we can create a pair of unary black vertices connected by a wire by Axiom~\ref{ax3}$.(i)$, and treat them as additional outputs, one for each diagram. Applying the negation lemma on both sides, we obtain
\begin{equation*}
\input{img/s4_functoriality.tex}
\end{equation*}
which is the plugging of two outputs connected to all the white vertices of their respective diagrams. The only case in which this still leaves the two diagrams disconnected is when one of the diagrams has no white vertices, that is, it looks like the right-hand side of equation (\ref{eq:absorbed}). In this case, by the absorption lemma, we can use its isolated black vertex to ``absorb'' the other diagram, which produces a diagram in pre-normal form.

So, suppose that $n, m > 0$. Focussing on the two outputs,
\begin{equation} \label{eq:wbialgebra}
\input{img/s4_functoriality2.tex}
\end{equation}
then, we can use Proposition~\ref{prop:inductive}$.(b)$ on each of the white vertices: for example, on the leftmost one, it leads to
\begin{equation*}
\input{img/s4_functoriality3.tex}
\end{equation*}
Each of the outgoing wires leads to a black vertex, so we can push the vertices indicated by arrows to the outside, and fuse them. Repeating this operation, we can push all black vertices to the outside, which leaves us with a tangle of $n\cdot m$ wires connecting white vertices, one for each pair $(z_i, z'_j)$, where $i = 1, \ldots, n$, and $j = 1, \ldots, m$.

This tangle is made of fermionic swaps; however, each of the white vertices has at least one wire connecting it to a black vertex, which means that Proposition~\ref{prop:inductive}$.(e)$ is applicable: this allows us to turn all the fermionic swaps into structural swaps. Finally, we can fuse all pairs of white vertices connected by a wire. After some rearranging, this leaves us with the pre-normal form diagram
\begin{equation*}
\input{img/s4_functoriality4.tex}
\end{equation*}
where the white vertex with parameter $z_i z'_j$ is connected both to the outputs to which the original vertex with parameter $z_i$ was connected, and to the outputs to which the original vertex with parameter $z'_j$ was connected, for $i = 1,\ldots, n$, and $j = 1, \ldots, m$.

Now, suppose one diagram is odd, or they both are odd. If the odd diagrams have at least one output wire, we can introduce a pair of black vertices on it, and apply the negation lemma to produce negated even diagrams. We can then apply the first part of the proof to obtain a diagram in pre-normal form negated once or twice, then apply the negation lemma again to conclude. If one of the odd diagrams has no outputs, it necessarily consists of a single nullary black vertex, and we can conclude with an application of the absorption lemma.
\end{proof}

\begin{lem}\label{lem:prenormal}
Any diagram in pre-normal form can be rewritten in normal form.
\end{lem}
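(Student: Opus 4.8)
The plan is to exhibit the three discrepancies between a pre-normal form and a genuine normal form and to eliminate each in turn, closely following the corresponding step of the qubit ZW calculus~\cite[Section 5.2]{hadzihasanovic2017algebra}. By definition, a pre-normal form differs from a normal form only in that (1) the white vertices may appear in an arbitrary order, (2) several distinct white vertices may be attached to exactly the same set of outputs, and (3) some parameters $z_i$ may vanish. I will treat the odd case; the even case is completely analogous, the only difference being the two-legged bottom black vertex. Throughout, I use that all the black and white vertices involved are symmetric under permutation of their wires (Proposition~\ref{prop:spider}$.(a)$ and $(a')$), so that I may freely reshuffle legs, and that the white vertices are even, so that any fermionic swaps created while reshuffling can be converted into ordinary swaps by Proposition~\ref{prop:inductive}$.(e)$ and the sliding rules Proposition~\ref{prop:spider}$.(d)$, $(d')$.

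First I would merge white vertices sharing an output set. Using the symmetry of the bottom black vertex I bring two such vertices, with parameters $z_i$ and $z_j$ and the same connection set, next to each other, and I fuse the black vertices sitting on each of their common outputs by Proposition~\ref{prop:spider}$.(b)$, so that the two white vertices are displayed as two parallel vertices strung between a common top and the common bottom black vertex. This is precisely the configuration of Proposition~\ref{prop:inductive}$.(c)$, which collapses them into a single white vertex carrying the parameter $z_i + z_j$ with the same connections. Iterating, I reach a diagram in which distinct white vertices are attached to distinct output sets.

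Next I would delete the white vertices whose parameter is $0$ (either originally, or arising as a cancelling sum in the previous step). A white vertex with parameter $0$ produces $\ket{0}$ on each of its legs; by Axiom~\ref{ax4}$.(e)$ each such leg can be rewritten as a black counit followed by a black unit, and the resulting black (co)units are then absorbed into the neighbouring black vertices — the bottom one and those on the outputs — using the (co)unit and fusion laws of the black (co)monoid encoded in the axioms of group~\ref{ax3} together with Proposition~\ref{prop:spider}$.(c)$. This severs all of the vertex's connections, leaving an isolated nullary white vertex with parameter $0$, which is the scalar $1$ and is erased. Finally, I reorder the surviving vertices — all with nonzero parameters and pairwise distinct output sets — into the chosen (lexicographic) order by the symmetry of the bottom black vertex once more; the result is exactly $g(v)$. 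If every white vertex was deleted, the diagram represents $v = 0$ and consists only of the black skeleton, which is one of the two forms of $g(0)$; these agree by Axiom~\ref{ax3}$.(j)$.

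The main obstacle is the zero-deletion step: making it precise in the syntax that a $0$-labelled white vertex genuinely detaches from the black skeleton requires the interplay of Axiom~\ref{ax4}$.(e)$ with the (co)unit laws of the fermionic-line (co)monoid, and some care that peeling off the legs one at a time — which momentarily changes the arity parity of the vertex — is harmless, since the vertex ends up nullary. The merging step also needs the arity-general form of Proposition~\ref{prop:inductive}$.(c)$ rather than only its binary instance, which is obtained after the black-vertex fusion of Proposition~\ref{prop:spider}$.(b)$. By contrast, the reordering is essentially free once the symmetry and the evenness of the white vertices are invoked.
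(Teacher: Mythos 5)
Your overall plan---merge white vertices sharing an output set by summing parameters, delete those with parameter $0$, then reorder---is the same as the paper's, but both substantive steps are justified by applying binary-white-vertex identities to vertices that are not binary, and this is a genuine gap. In a pre-normal form a white vertex has one leg to the bottom black vertex and one leg to each output it touches, so in the odd case a vertex touching $k\geq 3$ outputs has arity $k+1\geq 4$. Two such vertices with the same output set are attached to $k$ \emph{distinct} output black vertices plus the bottom one; fusing black vertices by Proposition~\ref{prop:spider}$.(b)$ tidies the picture but does not produce the configuration of Proposition~\ref{prop:inductive}$.(c)$, which concerns \emph{binary} white vertices strung in parallel between a single pair of black vertices. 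The ``arity-general form'' you appeal to is not among the derived equations and does not follow from black-vertex fusion; establishing it is exactly the content of the step you are skipping. The missing ingredient is the black--white bialgebra law: one first strips the parameters off as binary white vertices (Proposition~\ref{prop:spider}$.(b')$), then uses Proposition~\ref{prop:inductive}$.(b)$ to collapse the two parameter-free white vertices---which now form a complete bipartite graph with the shared output black vertices---into a single one, and only then do the two extracted parameters sit in a genuine binary convolution to which Axiom~\ref{ax4}$.(f)$ applies to give $z_i+z_j$.

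The zero-deletion step has the same defect: Axiom~\ref{ax4}$.(e)$ is a statement about the \emph{binary} white vertex with parameter $0$, so it cannot be applied ``to each leg'' of a $2n$-ary one, and no axiom licenses peeling legs off a white vertex one at a time. The correct route is to isolate the $0$ as a binary white vertex on the single leg leading to the bottom black vertex, apply Axiom~\ref{ax4}$.(e)$ there, and then push the resulting black state through the now parameter-free white vertex with Proposition~\ref{prop:inductive}$.(b)$, which converts all its remaining legs into black states that fuse into the output black vertices. A smaller point: the even case is not quite ``completely analogous''---each white vertex there has two bottom legs---which is why the paper instead reduces it to the odd case via the negation lemma (Lemma~\ref{lem:negation}) and treats the outputless scalar case separately with Proposition~\ref{prop:inductive}$.(c)$; but that is cosmetic next to the missing bialgebra steps above.
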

\begin{proof}
First of all, by the symmetry property of vertices, we can always reshuffle the white vertices of a diagram in pre-normal form to make them follow our preferred ordering of $n$-tuples of bits.

Suppose the diagram is odd, and two white vertices are connected to the same outputs. The relevant portion of the diagram looks like
\begin{equation*}
\input{img/s4_prenormalproof1.tex}
\end{equation*}
The two input wires both lead to the bottom black vertex; zooming in on that, we find
\begin{equation} \label{eq:prenormal_proof}
\input{img/s4_prenormalproof2.tex}
\end{equation}
which has a single white vertex, connected to the same outputs, replacing the two initial ones.

Now, suppose that there is a white vertex with parameter 0. The relevant part of the diagram is
\begin{equation*}
\input{img/s4_prenormalproof3.tex}
\end{equation*}
where we can fuse black vertices, which simply eliminates the white vertex.

If the diagram is even, and has at least one output wire, we can introduce a pair of binary black vertices, apply the negation lemma once to produce a negated odd diagram, reduce that to normal form, and apply the negation lemma again; it is easy to see that negation turns diagrams in normal form into diagrams in normal form, \emph{modulo} a reshuffling of white vertices.

If the diagram has no output wires, then it is of the form
\begin{equation*}
\input{img/s4_prenormalproof4.tex}
\end{equation*}
where the right-hand side is in normal form. This concludes the proof.
\end{proof}

\begin{thm}[Completeness]
The functor $f_E: F[T/E] \to \cat{LFM}$ induced by the soundness of $E$ for the interpretation $f: F[T] \to \cat{LFM}$ is a monoidal equivalence.
\end{thm}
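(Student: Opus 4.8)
The plan is to realise the inverse of $f_E$ through the normal-form assignment $g$ of~(\ref{normalform}), following the three stages announced at the start of the section. Since $f$, and hence $f_E$, is strict monoidal and acts as the identity on objects under the identification $n \leftrightarrow B^{\otimes n}$, it suffices to upgrade $g$ to a strict monoidal functor $g\colon \cat{LFM}\to F[T/E]$ and to check that $f_E\circ g$ and $g\circ f_E$ are both the identity. First I would extend $g$ from states to arbitrary morphisms: given $a\colon B^{\otimes n}\to B^{\otimes m}$, take its name $\lceil a\rceil\colon \mathbb{C}\to B^{\otimes(n+m)}$ under the self-duality of $\cat{LFM}$, form $g(\lceil a\rceil)\colon 0\to n+m$, and bend the last $m$ wires back down with the dualities of $F[T]$ to obtain $g(a)\colon n\to m$. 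Because the dualities of $\cat{LFM}$ lie in the image of $f$ and $f(g(v))=v$ for every state $v$ (Theorem~\ref{thm:universality}), this immediately yields $f_E\circ g=\mathrm{id}_{\cat{LFM}}$.

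The linchpin of the argument is the \emph{uniqueness} of normal forms: the assignment $v\mapsto g(v)$ is injective in $F[T/E]$ and satisfies $f(g(v))=v$, the only indeterminacy being the choice between the two forms in~(\ref{normalform}) when $v=0$, which are identified by Axiom~\ref{ax3}$.(j)$. Consequently, whenever a diagram $d$ can be rewritten under $E$ to \emph{some} normal form, that normal form is forced to be $g(f(d))$. I would then prove that \emph{every} diagram of $F[T]$ rewrites, modulo $E$, to a normal form, by induction on its construction from generators: the Functoriality lemma (Lemma~\ref{thm:functoriality}) together with Lemma~\ref{lem:prenormal} shows that any composite or tensor of (pre-)normal forms reduces to a normal form, which by uniqueness must be $g$ of the corresponding interpretation. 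This simultaneously establishes that $g$ is a functor --- for $g(b)\circ g(a)$ reduces to the normal form of $f_E(g(b)\circ g(a))=b\circ a$, namely $g(b\circ a)$ --- and that it is strict monoidal, since the juxtaposition of $g(a)$ and $g(a')$ reduces to $g(a\otimes a')$.

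The remaining ingredient, and the step I expect to require the most bookkeeping, is the base case of this induction: each of the seven generators $\mathsf{swap}$, $\mathsf{dual}$, $\dagg{\mathsf{dual}}$, $\mathsf{fswap}$, $\mathsf{black}_2$, $\mathsf{black}_3$, $\mathsf{white}_z$ must be shown to rewrite to the normal form of its interpretation. The state-like generators $\mathsf{black}_2$, $\mathsf{black}_3$, $\mathsf{white}_z$, and $\mathsf{dual}$ are close to normal form and reduce by the fusion and symmetry properties of Proposition~\ref{prop:spider}; the genuinely awkward cases are $\mathsf{swap}$ and $\mathsf{fswap}$, whose names are four-legged states, and $\dagg{\mathsf{dual}}$, which one handles by transposing through the dualities and then applying the \emph{same} reduction lemmas used in the trace and negation arguments (Lemmas~\ref{lem:trace} and~\ref{lem:negation}). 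Once every generator reduces, the monoidal functors $g\circ f_E$ and $\mathrm{id}_{F[T/E]}$ agree on all generators, hence coincide. Together with $f_E\circ g=\mathrm{id}_{\cat{LFM}}$ this exhibits $g$ as a two-sided strict monoidal inverse of $f_E$, so $f_E$ is in particular a monoidal equivalence. The main subtlety to watch throughout is the even/odd case split built into the definition of $g$: one must ensure that the odd forms produced by the intermediate rewrites are reconciled with the chosen normal form via the negation lemma, exactly as in Lemmas~\ref{lem:absorb} and~\ref{lem:prenormal}.
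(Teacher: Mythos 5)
Your proposal is correct and follows essentially the same route as the paper: the three-stage plan (extend $g$ via names and dualities, use the functoriality and pre-normal-form lemmas to reduce everything to the generators, then rewrite each generator in normal form) is exactly the argument given, with the paper likewise treating all generators in all-output form so that $\dagg{\mathsf{dual}}$ reduces to the $\mathsf{dual}$ case and handling the fermionic swap by first normalising a tensor product of dualities. The extra care you take in spelling out the uniqueness of normal forms and the two composite identities is left implicit in the paper's statement of the three stages, but it is the same proof.
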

\begin{proof}
By the combination of the previous two lemmas, it suffices to show that all the generators (with all wires transposed to output wires) can be rewritten in normal form. For the ternary and binary black vertices,
\begin{equation*}
\input{img/s4_genrewrite1.tex}
\end{equation*}
For the binary white vertex with parameter $z \in \mathbb{C}$,
\begin{equation*}
\input{img/s4_genrewrite2.tex}
\end{equation*}
By Axiom~\ref{ax4}$.(d)$, the rewriting of dualities in normal form follows as a special case of the binary white vertex with parameter $1$.

For the fermionic swap, we use the fact that we know how to rewrite the tensor product of two dualities in normal form:
\begin{equation*}
\input{img/s4_genrewrite4.tex}
\end{equation*}
which, by Proposition~\ref{prop:spider}$.(e')$ and~\ref{prop:inductive}$.(e)$, is equal to
\begin{equation*}
\input{img/s4_genrewrite5.tex}
\end{equation*}
The case of the structural swap is similar, and easier. This concludes the proof.
\end{proof}

\begin{rem} We can make this an equivalence of dagger compact closed categories, by defining the dagger of a morphism in $F[T/E]$, represented by a diagram in $F[T]$, to be the vertical reflection of that diagram, with parameters $z \in \mathbb{C}$ of white vertices turned into their complex conjugates $\overline{z}$. For example,
\begin{equation*}
\input{img/s4_dagger.tex}
\end{equation*}
\end{rem}

\begin{rem}
The only properties of complex numbers that were used in the proof are that they form a commutative ring, and that they contain an element $z$ such that $z + z = 1$ (namely, $1/2$). Thus, we can replace $\mathbb{C}$ with any commutative ring $R$ that has the latter property (for example, $\mathbb{Z}_{2n+1}$, for each $n \in \mathbb{N}$), and obtain a similar completeness result for ``LFMs with coefficients in $R$''.

Moreover, for any such ring, instead of introducing binary white vertices with arbitrary parameters $r \in R$, we can introduce one binary white vertex for each element of a family of generators of $R$, together with one axiom for each relation that they satisfy. Then, in the normal form, instead of having a white vertex labelled $r \in R$ at each end of the bottom black spider(s), we will need to have some expression of $r$ by sums and products of generators, encoded by composition and convolution by the fermionic line algebra. 

The completeness proof still goes through: we work with diagrams in pre-normal form, where terms in a sum of products of generators are decomposed into different legs of the bottom spider(s), until the very end, then proceed as in Lemma~\ref{lem:prenormal}, but stop before performing the steps in equation (\ref{eq:prenormal_proof}). 

For example, in the complex case, it may be convenient to have separate phase gates, that is, white vertices with parameter $e^{i\vartheta}$, for $\vartheta \in \ropen{0,2\pi}$, and ``resistor'' gates, with real parameter $r > 0$.
\end{rem}

\begin{rem}
	It is customary to describe the fermionic behaviour of a multi-particle system in terms of a pair of operators $\dagg{a}$ (creation) and $a$ (annihilation) that satisfy the anti-commutation relation $a\dagg{a} = 1 - \dagg{a}a$; see for example~\cite[Chapter 27]{woit2017quantum}. In our language, these operators can be defined as
	\begin{equation*}
		\input{img/s3_anticommutation1.tex}
	\end{equation*}

	\noindent We can see the anti-commutation relation as subsumed by the axioms in the following way: pulling back the linear structure of $\cat{LFM}$ to $F[T/E]$ through the equivalence, we have
	\begin{equation} \label{antipode}
	\input{img/s3_anticommutation2.tex}
	\end{equation}
	from which we obtain
	\begin{equation*}
		\input{img/s3_anticommutation3.tex}
	\end{equation*}

	\noindent which can be read as the equation $a\dagg{a} = 1 - \dagg{a}a$.
\end{rem}


\section{An application: the Mach-Zehnder interferometer}\label{sec:mach}

The Mach-Zehnder interferometer is a classic quantum optical setup (see for example~\cite{scully1997quantum}), which, despite its simplicity, can demonstrate interesting features of quantum mechanics, as in the Elitzur-Vaidman bomb tester experiment~\cite{elitzur1993quantum}. The theoretical setup can be straightforwardly imported into FQC, with the same statistics as long as single-particle experiments are concerned; an electronic analogue of the Mach-Zehnder interferometer has also been realised in practice~\cite{ji2003electronic}.

\blankline%

\begin{minipage}{0.25\linewidth}
\input{img/s5_machzehnder.tex}
\end{minipage}
\begin{minipage}{0.7\linewidth}
With the graphical notation introduced in Section~\ref{components}, the experimental setup is represented by the diagram on the left, where $r, t, r', t' \in \mathbb{C}$ and $\vartheta \in \ropen{0,2\pi}$ are parameters subject to $|r|^2 + |t|^2 = |r'|^2 + |t'|^2 = 1$. In practice, it would also include ``mirrors'', or beam splitters with $|r| = 1$, which we omit in the picture, instead taking the liberty of bending wires at will.
\end{minipage}

\blankline%

\noindent As a first application of the fermionic ZW calculus, we show how this circuit diagram can be simplified in just a few steps using our axioms, in such a way that its statistics become immediately readable from the diagram.

In our language, the diagram becomes
\begin{equation*}
\input{img/s5_simplify1.tex}
\end{equation*}
which, sliding the leftmost empty state past the fermionic swap, and using Axiom~\ref{ax3}$.(f)$ twice, becomes
\begin{equation*}
\input{img/s5_simplify2.tex}
\end{equation*}
Finally, using the fermionic swap symmetry of black vertices (Proposition~\ref{prop:spider}$.(e)$), together with Proposition~\ref{prop:inductive}$.(c)$, this simplifies to
\begin{equation*}
\input{img/s5_simplify3.tex}
\end{equation*}
If we input one particle, after fusing the bottom black vertices, we obtain a diagram in normal form, whose interpretation in $\cat{LFM}$ we can readily deduce:
\begin{equation*}
\input{img/s5_finalstate.tex}
\end{equation*}
So, the probability of detecting the particle at the left-hand output is $|r're^{i\vartheta} - t'\overline{t}|^2$, and the probability of detecting the particle at the right-hand output is $|r'te^{i\vartheta} + t'\overline{r}|^2$. If the beam splitters are symmetric, that is, $r = r' = \frac{1}{\sqrt{2}}$, and $t = t' = \frac{i}{\sqrt{2}}$, the probability amplitudes become
\begin{equation*}
	\frac{1}{2}(e^{i\vartheta} - 1) = e^{i(\frac{\vartheta + \pi}{2})}\sin\vartheta, \quad \quad \frac{i}{2}(e^{i\vartheta} + 1) = e^{i(\frac{\vartheta + \pi}{2})}\cos\vartheta,
\end{equation*}
leading to probabilities $\sin^2\vartheta$ of detecting the particle at the left-hand output, and $\cos^2\vartheta$ of detecting it at the right-hand output.

Arguably, given that this particular example involves at most binary gates, a matrix calculation would not have been considerably harder. On the other hand, the result appears here as the outcome of a short sequence of intuitive, algebraically motivated local steps, rather than the unexplained product of a large matrix multiplication. We expect the advantage to become clearer when implementations of rewrite strategies in graph rewriting software are used to simplify larger circuits.


\section{Dual-rail encoding and universal quantum computation}\label{dualrail}

Dual-rail encodings are a standard method in optical computing to perform universal quantum computation~\cite{ralphOpticalQuantumComputation2010, knill2001scheme}. Indeed, as the Hadamard gate has mixed parity, it cannot be obtained on a single fermionic particle in $\mathbf{LFM}$: the dual-rail encoding allows to recover this gate in the even sector of two fermionic modes. This means we will consider processes on pairs of wires: two modes are used as computational units. The \emph{logical qubit} is encoded in the even sector of $B \otimes B$, that is, the subspace generated by $\ket{00}$ and $\ket{11}$.

We will define a functorial mapping from pairs of LFMs to qubits realizing this encoding. In fact the dual-rail encoding arises from a more general relationship between the category $\graded_0$ and $\cat{Hilb}$. Consider the functor:
\begin{equation}
	\begin{aligned}
	{(-)}_0: \, &\graded_0 \rightarrow \cat{Hilb}\\
	& H \mapsto H_0\\
	& (f: H \rightarrow K) \mapsto (f_0: H_0 \rightarrow K_0)
	\end{aligned}
\end{equation}
Note that this functor can be extended to the whole of $\graded$ by sending odd maps to zero. More importantly, ${(-)}_0$ is monoidal and comonoidal. Indeed, for two objects $H,K$ in $\graded_0$ we have
\[{(H\otimes K)}_0 = (H_0 \otimes K_0) \oplus (H_1 \otimes K_1),\]
and the universal property of the direct sum induces a natural map
\[ p_{HK}: {(H\otimes K)}_0 \rightarrow H_0 \otimes K_0.\]
We can represent this natural transformation and its adjoint diagrammatically as follows:
\begin{center}
	\begin{minipage}{0.15\linewidth}
		\begin{tikzpicture}
		\begin{pgfonlayer}{bg}
		\draw[fill, color=gray!15] (0,0) to[out=-90,in=90] (0.5,-2) -- (1.5,-2) to[out=90,in=-90] (2,0)--(1.5,0) to[out=-90,in=-90,looseness=2.5] (0.5,0)--cycle;
		\draw[color=gray!50] (0,0) to[out=-90,in=90] (0.5,-2) -- (1.5,-2) to[out=90,in=-90] (2,0)--(1.5,0) to[out=-90,in=-90,looseness=2.5] (0.5,0)--cycle;
		\end{pgfonlayer}
		\begin{pgfonlayer}{mid}
		\node at (0.25,0.25) {$H$};
		\node at (1.75,0.25) {$K$};
		\node at (0.75,-2.25) {$H$};
		\node at (1.25,-2.25) {$K$};
		\draw (0.25,0) to[out=-90, in= 90] (0.75,-2);
		\draw (1.75,0) to[out=-90, in= 90] (1.25,-2);
		\end{pgfonlayer}
		\end{tikzpicture}
	\end{minipage}
	$p_{HK} :$ 
	\begin{minipage}{0.5\linewidth}
		$ {(H \otimes K)}_0 \rightarrow H_0 \otimes K_0$\enspace,
	\end{minipage}
\end{center}

\begin{center}
	\begin{minipage}{0.15\linewidth}
		\begin{tikzpicture}
		\begin{pgfonlayer}{bg}
		\path[fill, color=gray!15] (0,0) to[out=90,in=-90] (0.5,2) -- (1.5,2) to[out=-90,in=90]  (2,0)--(1.5,0) to[out=90,in=90,looseness=2.5]  (0.5,0)--cycle;
		\draw[color=gray!50] (0,0) to[out=90,in=-90] (0.5,2) -- (1.5,2) to[out=-90,in=90]  (2,0)--(1.5,0) to[out=90,in=90,looseness=2.5]  (0.5,0)--cycle;
		\end{pgfonlayer}
		\begin{pgfonlayer}{mid}
		\node at (0.25,-0.25) {$H$};
		\node at (1.75,-0.25) {$K$};
		\node at (0.75,2.25) {$H$};
		\node at (1.25,2.25) {$K$};
		\draw (0.25,0) to[out=90, in= -90] (0.75,2);
		\draw (1.75,0) to[out=90, in= -90] (1.25,2);
		\end{pgfonlayer}
		\end{tikzpicture}
	\end{minipage}
	$p_{HK}^\dagger :$ 
	\begin{minipage}{0.5\linewidth}
		$ H_0 \otimes K_0 \rightarrow {(H \otimes K)}_0$\enspace.
	\end{minipage}
\end{center}

Naturality is a topological feature of this representation, that is, given $f: H \rightarrow H'$ and $g:K \rightarrow K'$, we have
\input{img/s6_naturality}
and similarly for $p_{HK}^\dagger$.
These natural transformations satisfy the (co)associativity conditions: 
\input{img/s6_associativity}
We also have unit and counit morphisms for ${(-)}_0$:
\begin{center}
	\begin{minipage}{0.15\linewidth}
		\begin{tikzpicture}
		\begin{pgfonlayer}{bg}
		\path[fill, color=gray!15] (0.5,2) -- (1.5,2) to[out=-90,in=-90, looseness=2.5]  (0.5,2);
		\draw[color=gray!50] (0.5,2) -- (1.5,2) to[out=-90,in=-90, looseness=2.5]  (0.5,2);
		\end{pgfonlayer}
		\end{tikzpicture}
	\end{minipage}
	$\epsilon :$ 
	\begin{minipage}{0.5\linewidth}
		$\complex \xrightarrow{\sim} {(\complex)}_0$\enspace,
	\end{minipage}
\end{center}
\begin{center}
	\begin{minipage}{0.135\linewidth}
		\begin{tikzpicture}
		\begin{pgfonlayer}{bg}
		\path[fill, color=gray!15] (0.5,-2) -- (1.5,-2) to[out=90,in=90, looseness=2.5]  (0.5,-2);
		\draw[color=gray!50] (0.5,-2) -- (1.5,-2) to[out=90,in=90, looseness=2.5]  (0.5,-2);
		\end{pgfonlayer}
		\end{tikzpicture}
	\end{minipage}
	$\epsilon^\dagger :$ 
	\begin{minipage}{0.5\linewidth}
		${(\complex)}_0 \xrightarrow{\sim} \complex$\enspace.
	\end{minipage}
\end{center}
Which are inverses of each other (as linear maps they are just the identity on $\complex$), and satisfy the (co)unitality conditions: 

\input{img/s6_unitality}

\noindent Also, it holds that $p_{HK} \circ p^\dagger_{HK} = id_{H_0 \otimes K_0}$:

\input{img/s6_strong}
yet in general
\input{img/s6_hole}
that is, ``holes'' cannot be eliminated. In fact, the LHS of (\ref{holes}) is a projector by (\ref{strong}), that is, it is part of a measurement and can be implemented using postselection.

Now, the dual-rail encoding is obtained by restricting ${(-)}_0$ to $\cat{LFM}_0^2$ --- the full subcategory of $\cat{LFM}_0$ with objects are generated under finite tensor products by $B \otimes B$.

\begin{defi}
	For each morphism $D$ in $\cat{LFM}_0^2$, the dual-rail encoding of $D$ is the following morphism in $\cat{Qubit}$:
	\input{img/s6_encoding}
\end{defi}

We now aim to find a minimal gate set which gives universal quantum computation under the dual-rail encoding. First of all note that we can obtain a Hadamard gate using a beam splitter and two binary black vertices (also known as $X$-gates):

\input{img/s6_hadamard}

\begin{rem}
	Note that the two instances of an $X$-gate are used in order to perform the Hadamard on the even sector of two LFMs: without those, the beam splitter would give a Hadamard gate on the odd sector.
\end{rem}

\noindent Then, $Z$-phases with angle $\vartheta$ are obtained as follows:

\input{img/s6_phases}

Fermionic linear optical computing using only beam splitters and phases is efficiently classically simulatable~\cite{terhalClassicalSimulationNoninteractingfermion2002a}.
As mentioned in Section~\ref{components}, the swap gate allows to recover universal quantum computation, but it is not physically implementable because linear optical setups are restricted to two dimensions.
The even projector, also known as ``parity gate'' in the literature, has been discussed as a powerful resource for fermionic linear optical computing~\cite{ionicioiuEntanglingSpinsMeasuring2007a}. It is known that this gate can be performed using charge measurements on fermions and that this allows universal quantum computation~\cite{beenakkerChargeDetectionEnables2004a}.
We now recover this universality result in our framework: we will see that even projectors on the physical qubits allow to form ``holes'' as in (\ref{holes}) at the logical level. Our notation reveals the topological nature of universal quantum computation with fermions.

\begin{rem}
	We have used a planar notation for the dual-rail encoding to emphasize the two-dimensional restriction of optical quantum computation. This is only a syntactic choice: we could have used \emph{cobordisms} instead to denote our functors, but it would put emphasis on the fact that $\graded_0$ is a modular category~\cite{bartlettExtended3dimensionalBordism2014}, which is an overreach here.
\end{rem}

First note that the green (or $Z$) spider of the ZX calculus is obtained as follows:
\input{img/s6_spider}
The gates in (\ref{hadamard}) and (\ref{phases}) together with (\ref{spider}) generate the ZX calculus, as it appears in~\cite{coecke2017picturing} or in its complete version~\cite{lics2018zwzx}. Therefore, we can now express any ZX diagram through the dual rail-encoding. For example, using the $Z$ spider and the Hadamard gate (\ref{hadamard}), we can now represent a CZ gate:
\input{img/s6_czgate}
Composing Hadamard gates (\ref{hadamard}), phases (\ref{phases}) and $Z$ spiders, we may form holes. However, these can be performed at the physical level using even projectors:

\input{img/s6_killholes}
Thus we can state the following.

\begin{prop}
	Any qubit unitary can be obtained using the dual-rail encoding from fermionic circuits only involving beam splitters, phases, even projectors, $X$-gates and creation-annihilation pairs. In particular, no swap or fermionic swap gates are needed as basic gates.
\end{prop}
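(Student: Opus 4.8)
The plan is to reduce the statement to the universality of the ZX calculus and then realise each of its generators through the dual-rail encoding using only the permitted gates. By \cite{lics2018zwzx} (see also \cite{coecke2017picturing}), every qubit unitary can be written as a ZX diagram built from $Z$-spiders, phase gates and the Hadamard gate. Equations \eqref{hadamard}, \eqref{phases} and \eqref{spider} already exhibit dual-rail realisations of exactly these three generators: the Hadamard from a beam splitter with parameters $1,1$ together with two $X$-gates, the $Z$-phase from a phase gate, and the $Z$-spider from the comonoidal ``cap'' $p^\dagger$ composed with a duality, which at the fermionic level is assembled from creation-annihilation pairs and black vertices. So first I would record that each generator $G$ admits a fermionic circuit $C_G$, built only from the allowed gates, whose dual-rail encoding \eqref{encoding} equals $G$ in $\cat{Qubit}$; crucially, none of these three constructions uses a swap or a fermionic swap.

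The heart of the argument is composition. The functor ${(-)}_0$ is monoidal and comonoidal but not strong: while $p_{HK}\circ p^\dagger_{HK} = \mathrm{id}$ by \eqref{strong}, the composite $p^\dagger_{HK}\circ p_{HK}$ is only a projector onto the even sector, the ``hole'' of \eqref{holes}. Consequently, when two dual-rail encodings are composed in $\cat{Qubit}$, a factor $p^\dagger\circ p$ appears between them, so the composite is \emph{not} literally the dual-rail encoding of the composite fermionic circuit; the two differ precisely by one hole per logical wire. The key observation --- equation \eqref{killholes} --- is that such a hole is realised at the physical level by an even projector. Hence, given a ZX diagram $U = G_k\circ\cdots\circ G_1$, I would build the fermionic circuit $C = C_{G_k}\circ P\circ\cdots\circ P\circ C_{G_1}$, inserting one even projector $P$ per logical wire between consecutive blocks; using functoriality of ${(-)}_0$ together with ${(-)}_0(P) = p^\dagger p$, its dual-rail encoding equals $U$. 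Tensor products are handled identically, with the unit and counit $\epsilon, \epsilon^\dagger$ of ${(-)}_0$ taking care of scalars and empty wires, and the CZ gate of \eqref{czgate} already illustrating the composition of an encoded spider with an encoded Hadamard.

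The hard part will be the bookkeeping of these holes: one must check that the only obstruction to the functoriality of the encoding is exactly the insertion of parity projectors, and that each such projector is an allowed even-projector gate --- not, say, a more general measurement or a genuine wire crossing. This is where the planar, topological character of the encoding pays off: every reconnection of logical wires is effected by cups, caps and spiders (from black vertices and creation-annihilation pairs), by beam splitters, and by the hole-filling even projectors, so that neither the symmetric swap nor the fermionic swap is ever required as a basic gate. Assembling these pieces over an arbitrary ZX representation of $U$ then yields the full generating set through the dual-rail encoding and establishes the ``in particular'' clause as well.
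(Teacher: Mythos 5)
Your proposal is correct and follows essentially the same route as the paper: realise a universal gate set (the paper uses Hadamard, phases and CZ, with CZ built from the $Z$-spider and Hadamard exactly as in your generator-by-generator account) through the dual-rail encoding, and observe that composing encoded blocks produces only ``holes'' which the even projector eliminates via (\ref{killholes}). Your more explicit bookkeeping of where the $p^\dagger \circ p$ factors arise is a faithful elaboration of the paper's two-sentence argument rather than a different proof.
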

\begin{proof}
	Hadamard, phases and CZ form a universal gate set and they can all be obtained as shown above. Composing these will yield diagrams containing holes (and at most two wires between each hole) which we can eliminate using (\ref{killholes}).
\end{proof}

\begin{rem}
	We have only focused on the expressivity aspects of the dual-rail encoding. In future work we would like to understand the rewriting theory of this encoding, which would allow to recover all the ZX rules from logical axioms for the encoding together with the underlying rules of the fermionic ZW calculus.
\end{rem}


\section{Conclusions and outlook}

In this paper, we introduced a string-diagrammatic language for circuits of local fermionic modes, together with equations that axiomatise their theory of extensional equality: that is, two diagrams represent the same linear map of local fermionic modes if and only if they are equal modulo the equations. We believe that these fermionic circuits are to the ZW calculus what Clifford circuits~\cite{backens2014zx} are to the ZX calculus: not the largest family of circuits that can technically be represented, but the one whose basic gates have simple, natural representations in terms of the language's components.

There are still several open questions and directions on the ``syntactic'' side. We do not know whether all our axioms are independent, nor have we looked at rewrite strategies, or ways of orienting the equations, beyond the goal of proving completeness. Additional rewrite strategies could be deduced from the rules of the ZX calculus using the dual-rail encoding to compare the two calculi. There is, then, the question of variants and extensions: we have mentioned a potential extension to mixed-state processes, via a mixed quantum-classical calculus in the style of~\cite[Chapter 8]{coecke2017picturing}; moreover, both universality and completeness are open problems for anyonic and bosonic generalisations of the fermionic ZW calculus, in the style of~\cite[Section 5.3]{hadzihasanovic2017algebra}.

The greatest challenge, however, is finding ``real-world'' applications for the calculus. With the Mach-Zehnder interferometer, we have only given a toy example, perhaps useful for pedagogical purposes, but we have not even attempted to link our work to current research on algorithms or complexity in FQC\@. The first version of a ZW calculus was introduced in order to tackle open problems in the classification of multipartite entanglement~\cite{coecke2010compositional}: as a first step, the fermionic ZW calculus, with its strong topological flavour, involving braidings and a single type of ternary vertices, may be a better testing ground for this approach.


\bibliographystyle{alpha}
\bibliography{main}

\end{document}